\newcommand{\lyxmathsym}[1]{\ifmmode\begingroup\def\b@ld{bold}
  \text{\ifx\math@version\b@ld\bfseries\fi#1}\endgroup\else#1\fi}
\providecommand{\tabularnewline}{\\}
\theoremstyle{plain}
\newtheorem{thm}{\protect\theoremname}[section]
\theoremstyle{definition}
\newtheorem{defn}[thm]{\protect\definitionname}
\theoremstyle{remark}
\newtheorem{rem}[thm]{\protect\remarkname}
\theoremstyle{plain}
\newtheorem{lem}[thm]{\protect\lemmaname}
\theoremstyle{plain}
\newtheorem{prop}[thm]{\protect\propositionname}
\theoremstyle{plain}
\newtheorem{cor}[thm]{\protect\corollaryname}
\theoremstyle{definition}
\newtheorem{example}[thm]{\protect\examplename}
\providecommand{\corollaryname}{Corollary}
\providecommand{\definitionname}{Definition}
\providecommand{\examplename}{Example}
\providecommand{\lemmaname}{Lemma}
\providecommand{\propositionname}{Proposition}
\providecommand{\remarkname}{Remark}
\providecommand{\theoremname}{Theorem}
\begin{document}
\title{Profit and loss decomposition in continuous time and approximations}
\author{Gero Junike\thanks{Carl von Ossietzky Universität, Institut für Mathematik, 26129 Oldenburg,
Germany.} \thanks{Corresponding author. ORCID: 0000-0001-8686-2661, E-Mail: gero.junike@uol.de\\ \textcolor{blue}{This article is accepted in the journal Finance and Stochastics.}},
Hauke Stier$^{\ast}$, Marcus Christiansen$^{\ast}$ }
\maketitle

\begin{abstract}
Financial institutions and insurance companies that analyze the evolution and sources of profits and losses often look at risk factors only at discrete reporting dates, ignoring the detailed paths. Continuous-time decompositions avoid this weakness and also make decompositions consistent across different reporting grids. We construct a large class of continuous-time decompositions from a new extended version of Itô's formula and uniquely identify a preferred decomposition from the axioms of exactness, symmetry and normalization. This unique decomposition turns out to be a stochastic limit of recursive Shapley values, but it suffers from a curse of dimensionality as the number of risk factors increases.  We develop an approximation that breaks this curse when the risk factors almost surely have no simultaneous jumps.
\end{abstract}
\begin{verse}
\textbf{Keywords} profit and loss attribution; sequential decompositions;
change analysis; risk decomposition; Itô's formula

\textbf{Mathematics Subject Classification (2020)} 60H05, 60H30, 91G10, 91G60, 91G30, 91G40

\textbf{JEL Classification} C02, C30, C63, G10, G12 
\end{verse}

\section{Introduction}

Profit and loss (P\&L) attribution, also known as change analysis,
has a long history in risk management. P\&L attribution is the process
of analyzing the change between two valuation dates and explaining
the evolution of the P\&L by the movement of the sources (risk factors) 
between the two dates, see \citet{cadoni2014internal}. In other words,
the change in the P\&L over time is \emph{decomposed} in terms of
the different risk factors to explain how each factor \emph{contributes}
to the P\&L. In the literature, there are many ways to obtain a P\&L
attribution. For example, consider a portfolio in EUR consisting of
a long position in the S\&P 500, $Y$ for short. The P\&L of such
a portfolio is driven by two risk factors: $Y$ and the USDEUR exchange
rate, $X$ for short. To decompose the P\&L over one year, we look
for two real numbers $D_{X}$ and $D_{Y}$, such that
\[
X(1)Y(1)-X(0)Y(0)\overset{!}{=}D_{X}+D_{Y}.
\]
The numbers $D_{X}$ and $D_{Y}$ are interpreted as the contribution
of $X$ and $Y$ to the P\&L.  In the literature we can find many desirable properties that a decomposition should possess, see \citet{Shubik1962Incentives}, \cite{friedman1999three} and \citet{shorrocks2013decomposition}
among many others. The authors argue that a decomposition should be \emph{symmetric}, i.e., the contributions
of the risk factors should be independent of the way in which the
risk factors are labeled or ordered. These authors also require that
the sum of all contributions equals the P\&L, such decompositions are called \emph{exact}. Further, \citet{christiansen2022decomposition}
argues that a decomposition should be \emph{normalized}, i.e., if
a risk factor remains constant, its contribution to the P\&L should
be zero. It is also desirable for a decomposition to consider the
full path of each risk factor, i.e., to use all available information,
see \citet{mai2022} and \cite{flaig2024empirical}.

\subsection*{Common decomposition principles}

A common method for creating decompositions is to sequentially update
the risk factors one by one while `freezing' all other risk factors.
This idea dates back at least to \citet{oaxaca1973male} and \citet{blinder1973wage},
who developed a \emph{sequential updating} (SU) decomposition technique
in a one-period setting. The SU decomposition works as follows when we
update the risk factor $X$ first:
\[
D_{X}=X(1)Y(0)-X(0)Y(0),\quad D_{Y}=X(1)Y(1)-X(1)Y(0).
\]
Alternatively, one may update $Y$ first to obtain
\[
D_{X}=X(1)Y(1)-X(0)Y(1),\quad D_{Y}=X(0)Y(1)-X(0)Y(0).
\]
Each SU decomposition is exact, but if there are $d$ risk factors,
there are $d!$ different updating orders and therefore $d!$ different
SU decompositions. \citet{cadoni2014internal} call the one-period SU
decomposition \emph{waterfall} and apply it to P\&L attribution.
See \citet{fortin2011} for an overview on how the SU decomposition
is used in various fields of economics. 

The SU decomposition can also be defined in a multi-period setting
by dividing the time horizon into sub-intervals and applying the SU
decomposition recursively on each sub-interval. \citet{jetses2022general}
and \citet{christiansen2022decomposition} analyzed the limit of the SU decomposition when the mesh size of the
time grid converges to zero. In the limit, the decomposition takes
the whole path into account and the limiting SU decomposition is called
the \emph{infinitesimal sequential updating} (ISU) decomposition.
The ISU decomposition is independent of any time grid, which is helpful
``to prevent inconsistencies when using conflicting sub-intervals
for different purposes'', see \citet[p. 2]{flaig2024empirical}.

The \emph{averaged sequential updating }(ASU) decomposition, also
known as the \emph{Shapley value}, is simply the arithmetic average
of the $d!$ possible SU decompositions. It has many desirable properties,
in particular: it is exact and symmetric. \citet{Shapley1953Game}
introduces the ASU decomposition for cooperative games. \citet{Shubik1962Incentives}
defines the ASU decomposition for cost functions. \citet{Sprumont1998Ordinal}
and \citet{friedman1999three} provide an axiomatization of the ASU
decomposition for cost functions. \citet{jetses2022general}
define the \emph{infinitesimal averaged sequential updating} (IASU)
decomposition as the average of the $d!$ possible ISU decompositions.

\subsection*{Main contributions}

In this paper, we start directly in a time-continuous setting: If
the portfolio is a twice differentiable function of the risk factors
and the risk factors have continuous paths, Itô's formula provides a
natural additive decomposition of the portfolio. Our main contributions
are as follows: In order to treat risk factors with jumps, we prove
an expanded version of Itô's formula and use it to define a large
class of reasonable decompositions, which we call \emph{Itô decompositions}
and which include all $d!$ ISU and the IASU decompositions as special
cases. We prove that there is a unique Itô decomposition (up to indistinguishability) that satisfies
the three axioms of exactness, symmetry and normalization. We show
that it is indistinguishable from the IASU decomposition. We further show that
the IASU decomposition can be interpreted as the limiting case of
the ASU decomposition: compared to \citet{jetses2022general}, who
assume that the covariations between the risk factors are zero, we
use much weaker assumptions to prove the convergence of the SU/ ASU
decompositions to the ISU/ IASU decompositions.

In summary, we propose to use the IASU decomposition to obtain a P\&L
attribution because it considers the whole paths of the risk factors and satisfies the axioms of exactness, symmetry and
normalization. However, in practical applications, the IASU decomposition
has two drawbacks: a) similar to the ASU decomposition, it suffers
from the curse of dimensionality; b) the IASU decomposition is defined
by stochastic integrals, which somehow must be approximated in practice. Naively
approximating these integrals can lead to decompositions that are
no longer exact. As another important contribution of this
paper, we show that the IASU decomposition does not suffer from the
curse of dimensionality if the risk factors do not have simultaneous
jumps. In this case, the IASU decomposition is indistinguishable from the average of two (suitably selected) ISU decompositions. To avoid point b), we suggest approximating ISU/ IASU by SU/ ASU.

Up to now, most practitioners have applied an arbitrary SU decomposition
in a one-period setting to obtain an annual P\&L attribution, see \citet{cadoni2014internal}.
Working with real market data, \cite{flaig2024empirical} empirically
show that the SU decomposition depends significantly on the order
or labeling of the risk factors, and that some SU decompositions may
even ignore relevant risk factors, which may ``lead to wrong trading
and hedging decisions'', see \citet[p. 2]{flaig2024empirical}. 

Our theoretical analysis suggests using the average of only two SU decompositions\footnote{Define one SU decomposition in any order, e.g., alphabetically ascending,
and the other SU decomposition by the reverse order, e.g., alphabetically
descending, see Theorem \ref{thm:IASU_nsj} for details.} with a sufficiently fine time grid to obtain a P\&L attribution, since such a decomposition is arbitrarily close
to the IASU decomposition when the risk factors do not have simultaneous
jumps. Thus, our analysis is highly relevant for practitioners:
we recommend computing two SU decompositions instead of one and using
a finer grid than just annual data to obtain a decomposition that
is much closer to the IASU decomposition than a single SU decomposition. While the choice of the decomposition (the average of two SU decompositions) is theoretically justified, we have only numerical experiments available to estimate the time grid and we recommend using monthly or weekly data.

\subsection*{Further literature review}

Is there any other way to break the curse of dimensionality? \citet{christiansen2022decomposition}
proves that the ISU decomposition is symmetric if it is stable with
respect to small perturbations in the empirical observation of the
risk factors. In Appendix \ref{subsec:Stability}, we show that the
ISU decomposition of a simple product of two correlated Brownian motions
is not stable. This shows that stability is a rather strong assumption.

There are other decomposition principles as well: There
is the so-called \emph{one-at-a-time} (OAT) decomposition, which is
also known as \emph{bump and reset}, see \citet{cadoni2014internal}.
The OAT decomposition is closely related to the SU decomposition.
It is symmetric, but generally not exact.  \citet{frei2020new} analyzes the limit of the OAT decomposition when the mesh size of the
time grid converges to zero.

There are also completely different approaches: \citet{fischer2004decomposition}
uses a conditional expectations approach.  
 \citet{rosen2010risk} use the Hoeffding method for a decomposition
of credit risk portfolios. \citet{frei2020new} uses the Euler principle
for risk attribution. 
\citet{ramlau1991distribution} and \citet{norberg1999theory} decompose surplus in life insurance by heuristic integral representations, where the integrators are interpreted as the driving forces of change and determine the attribution. A similar idea is used in 
 \citet{schilling2020decomposing} based on the martingale representation theorem.

\subsection*{Contents}

In Section \ref{sec:Notation}, we establish some notation. In Section
\ref{sec:Family-of-Ito}, we develop an extended version of Itô's formula
and introduce the family of Itô decompositions. We show that the IASU
decomposition is the only exact and symmetric Itô decomposition, and
we break the curse of dimensionality of the IASU decomposition in
Theorem \ref{thm:IASU_nsj}. In Section \ref{sec:SU,-OAT-and}, we
prove that the IASU decomposition can be approximated by the ASU decomposition.
In Section \ref{sec:Applications}, we provide some numerical applications.
Section \ref{sec:Conclusions} concludes.

\section{\label{sec:Notation}Notation}

Let $\big(\Omega,\mathcal{F},\mathbb{F}=(\mathcal{F}_{t})_{t\geq0},P\big)$
be a filtered probability space satisfying the usual conditions, i.e., $\mathcal{F}_{0}$ contains all null sets and $\mathbb{F}$ is right-continuous. Let
$\mathcal{X}$ be the set of all $\mathbb{F}$-semimartingales. 
A so-called \emph{risk basis} or \emph{information
basis} is a $d$-dimensional semimartingale $X\in\mathcal{X}^d$, and its $d$ components are denoted as \emph{risk factors}
or \emph{sources of risk}. For a stopping time $s$, we define the stopped semimartingale by
$X^{s}=\left(X_{1}^{s},...,X_{d}^{s}\right)$. Equality of random variables is understood in the almost sure sense and equality of stochastic processes is understood up to indistinguishability. Let $\mathcal{C}_{2}$
be the set of twice differentiable functions from $\mathbb{R}^{d}$
to $\mathbb{R}$. For $f\in\mathcal{C}_{2}$ and $i,j=1,...,d$, we
write $f_{i}$ and $f_{ij}$ for the partial derivatives $\partial_{i}f$
and $\partial_{i}\partial_{j}f$. We call a map $F:\mathcal{X}^{d}\to\mathcal{X}$ \emph{non-anticipative}
if for any stopping time $s$ it holds that
\begin{equation}
F\left(X^{s}\right)(t)=F(X)(\min(t,s)),\quad t\geq0.\label{eq:FX^s}
\end{equation}
Such a non-anticipative mapping depends only on the information
up to time $t$, i.e., on $X^{t}$. By $\mathcal{M}$ we denote some
sub-space of all non-anticipative mappings. By $\mathcal{M}(\mathcal{C}_{2})$
we denote the space of functionals $F:\mathcal{X}^{d}\to\mathcal{X}$
such that $F(X)=\left(f(X(t))\right)_{t\geq0}$, $X\in\mathcal{X}^{d}$,
for some $f\in\mathcal{C}_{2}$, which are clearly non-anticipative.
By $\sigma_{d}$ we denote the set of all $d!$ permutations of $\{1,...,d\}$.
Let $id\in\sigma_{d}$ be the identity. In a slight abuse of notation,
we define for $\pi\in\sigma_{d}$
\[
\pi(x_{1},...,x_{d})=\left(x_{\pi(1)},...,x_{\pi(d)}\right),\quad x\in\mathbb{R}^{d},
\]
and
\begin{align*}
\pi(X_{1},...,X_{d}) & =\left(X_{\pi(1)},...,X_{\pi(d)}\right),\quad X\in\mathcal{X}^{d}.
\end{align*}
For two one-dimensional semimartingales $Z$ and $Y$ and a càglàd
process $H$, we denote by $\int_{0}^{t}H(s)dZ(s):=\int_{0+}^{t}H(s)dZ(s)$
the stochastic integral. In particular $\int_{0}^{0}H(s)dZ(s)=0$
by convention. We further set $Z(0-)=0$, 
\[
Z(t-)=\lim_{\varepsilon\searrow0}Z(t-\varepsilon),\quad t>0,\quad\Delta Z(t)=Z(t)-Z(t-),\quad t\geq0,
\]
\[
[Z,Y]=ZY-Z(0)Y(0)-\int_{0}^{\cdot}Z(u-)dY-\int_{0}^{\cdot}Y(u-)dZ
\]
and
\[
[Z,Y]^{c}=[Z,Y]-\sum_{0<s\leq\,\cdot}\Delta Z(s)\Delta Y(s).
\]
By ``$\overset{p}{\rightarrow}$'' we denote the convergence in probability of a sequence of random variables.
For $A\subset\{1,...,d\}$ we define the projection 
\begin{align*}
p_{A}:\mathbb{R}^{d} & \to\mathbb{R}^{d}\\
x & \mapsto\big(x_{1}1_{A}(1),...,x_{d}1_{A}(d)\big),
\end{align*}
where the function $1_{A}(h)$ is one if $h\in A$ and zero otherwise.

\section{\label{sec:Family-of-Ito}Family of Itô decompositions}

Similar to \citet{shorrocks2013decomposition} and \citet{christiansen2022decomposition},
we define a \emph{decomposition} as follows:
\begin{defn}
A map
\begin{align*}
\delta:\mathcal{M}\times\mathcal{X}^{d} & \to\mathcal{X}^{d}\\
(F,X) & \mapsto(\delta_{1}(F,X),...,\delta_{d}(F,X))
\end{align*}
is called a \emph{decomposition}. 
\end{defn}

We interpret $\delta_{i}(F,X)(t)$ as the contribution of $X_{i}$
to the profit and loss $F(X)(t)-F(X)(0)$ in $(0,t]$. We recall the
following three axioms from the literature:
\begin{description}
\item [{i)}] A decomposition is called \emph{exact} if for all $F\in\mathcal{M}$
and all $X\in\mathcal{X}^{d}$ the following equation holds:
\[
F(X)-F(X)(0)=\delta_{1}(F,X)+...+\delta_{d}(F,X).
\]
\item [{ii)}] A decomposition is called \emph{symmetric} if for all $\pi\in\sigma_{d}$,
all $F\in\mathcal{M}$ and all $X\in\mathcal{X}^{d}$ the following
implication holds:
\[
F(X)=F(\pi(X)) \quad \Longrightarrow\quad\delta_{i}(F,X)=\delta_{\pi^{-1}(i)}(F,\pi(X)).
\]
\item [{iii)}] A decomposition is called \emph{normalized} if for all $0\leq r<s<\infty$,
all $i=1,...,d$, all $F\in\mathcal{M}$ and all $X\in\mathcal{X}^{d}$
the following implication holds:
\begin{align*}
&X_{i}\text{ is indistinguishable from a constant process on }(r,s]\\
&\Longrightarrow\delta_{i}(F,X)\text{ is indistinguishable from a constant process on }(r,s].
\end{align*}

\end{description}
Axiom i) ensures that a decomposition is able to fully explain the P\&L, see \citet{shorrocks2013decomposition} and \citet{christiansen2022decomposition}.
Axiom ii) considers symmetric maps $F$ and states that if $F$ does
not depend on the order or labeling of the risk factors, then the
decomposition shall also be independent of the order or labeling of the risk
factors. The symmetry axiom is motivated by the fact that $\delta_{i}(F,X)$
represents the contribution of $X_{i}$ and that the term $\delta_{\pi^{-1}(i)}(F,\pi(X))$
also describes the contribution of 
\[
\pi(X)_{\pi^{-1}(i)}=(X_{\pi(1)},...,X_{\pi(d)})_{\pi^{-1}(i)}=X_{i}.
\]
The symmetry axiom has already been mentioned in similar form in \citet{friedman1999three}
and \citet{shorrocks2013decomposition}. If the risk factor $X_{i}$
remains constant during some time interval $(r,s]$, it does not contribute
to $F(X)(s)-F(X)(r)$, so the contribution of $X_{i}$ in $(r,s]$ should also be zero. This is exactly reflected by the normalization axiom,
taken from \citet{christiansen2022decomposition}. 

Next, we indicate how Itô's formula helps to define decomposition
principles: Let $f:\mathbb{R}^{d}\to\mathbb{R}$ be twice continuously
differentiable. For $i,j=1,...,d$ let
\begin{align}
I_{i} & :=\int_{0}^{\cdot}f_{i}(X(s-))dX_{i}(s),\quad I_{ij}:=\int_{0}^{\cdot}f_{ij}(X(s-))d[X_{i},X_{j}]^{c}(s),\label{eq:Ii}\\
S & :=\sum_{0<s\leq\cdot}\bigg\{ f(X(s))-f(X(s-))-\sum_{i=1}^{d}f_{i}(X(s-))\Delta X_{i}(s)\bigg\}.\label{eq:S}
\end{align}
Itô's formula states that for $t\geq0$ it holds for any semimartingale
$X$ that 
\begin{align}
f(X(t))-f(X(0))= & \sum_{i=1}^{d}I_{i}(t)+\frac{1}{2}\sum_{i=1}^{d}I_{ii}(t)+\frac{1}{2}\sum_{i,j=1 \atop i \neq j}^{d}I_{ij}(t)+S(t).\label{eq:Ito_con}
\end{align}
Assume that $X$ has continuous paths without interaction effects,
i.e., $S=0$ and $I_{ij}=0$, $i\neq j$. Then, Eq.~(\ref{eq:Ito_con}) provides
a natural way to additively decompose the P\&L $f(X(t))-f(X(0))$:
 
By the normalization axiom, $I_{i}$ and $I_{ii}$ should be assigned to $\delta_{i}$, which is interpreted as the contribution of $X_i$. To see this, assume that some $\delta_{j}$
would depend on $I_{i}$ or $I_{ii}$ for $i\neq j$. Assume that
$X_{j}$ is constant everywhere. According to the normalization axiom,
we would then have $\delta_{j}=0$. So, $\delta_{j}$ must not depend
on $I_{i}$ or on $I_{ii}$.

However, how to handle the interaction effects $I_{ij}$, $i\neq j$ and the jump part $S$ is not so obvious. Therefore, in Proposition
\ref{prop:(Generalized-Itos-formula).} we provide an extended version
of Itô's formula. Based on Proposition \ref{prop:(Generalized-Itos-formula).},
we define the large family of\emph{ Itô decompositions} in Definition
\ref{def:ito_decomposition} and show in Section \ref{sec:Family-of-Ito}
that the family of Itô decompositions contains many well-known decomposition
principles as special cases. Within the family of Itô decompositions,
we will identify a single decomposition that satisfies the axioms
of exactness, symmetry and normalization. For $A\subseteq\{1,\dots,d\}$, $i\in\{1,...,d\}$ and $s>0$ define
\begin{align*}
Y_{i}^A(s):= & f\bigg(X(s-)+p_{A}\big(\Delta X(s)\big)\bigg)-f\bigg(X(s-)+p_{A\backslash\{i\}}\big(\Delta X(s)\big)\bigg) -f_{i}\big(X(s-)\big)\Delta X_{i}(s)
\end{align*}
and
\[
S_{i}^{A}(X):=\sum_{0<s\leq\cdot}Y^A_{i}(s).
\]
For $\pi\in\sigma_{d}$ define
\begin{equation}
S_{i}^{\pi}(X):=S_{i}^{\{j\,|\pi(j)\leq\pi(i)\}}(X).\label{eq:Spi_i}
\end{equation}
To obtain $S_{i}^{\pi}(X)$, all time points $s$ where $X_{i}$ jumps
are considered. All risk factors
except $X_{i}$ are fixed at $s$ or $s-$, depending on the choice of $\pi$, and only $X_{i}$ varies
between $s-$ and $s$.
\begin{lem}
\label{lem:S^A_abs_con}Let $i\in\{1,...,d\}$, $X\in\mathcal{X}^d$
and $A\subset\{1,...,d\}$. If $i\in A$, then $S_{i}^{A}(X)$ is
a semimartingale with a.s.~paths of finite variation on compacts.
\end{lem}

\begin{proof}
Fix $X\in\mathcal{X}^d$. Let $N$ be a null set such that $u\mapsto|X_{i}(u)(\omega)|$, $i=1,...,d$,
is càdlàg for $\omega\in\Omega\setminus N$ and 
\begin{equation}
\sum_{h,j=1}^{d}\sum_{0<s\leq t}|\Delta X_{h}(s)(\omega)\Delta X_{j}(s)(\omega)|<\infty,\quad\omega\in\Omega\setminus N,\quad t\geq0.\label{eq:appyN2}
\end{equation}
Such $N$ exists since $X$ is a semimartingale. Let $\omega\in\Omega\setminus N$
and $t\geq0$. Let $\mathcal{M}_{\omega}\subset\mathbb{R}^{d}$ be
the closure of the set $\{X(u)(\omega),\,u\in[0,t]\}$, which is compact.
The function $f$ and its derivatives are continuous and reach a maximum
and minimum on the convex hull of $\mathcal{M}_{\omega}$, which is
compact by Carathéodory's theorem, see \citet[Sec.~2.3]{gruenbaum2013convex}.
Hence, $f$ and its derivatives are bounded on the convex hull of
$\mathcal{M}_{\omega}$. Let $s\in(0,t]$. Let us develop $f$ around
$X(s-)(\omega)$ using a Taylor expansion. We have that 
\begin{align*}
  f\bigg(X(s-)(\omega)+p_{A}\big(\Delta X(s)(\omega)\big)\bigg)
  =f\big(X(s-)(\omega)\big)+\sum_{h\in A}f_{h}\big(X(s-)(\omega)\big)\Delta X_{h}(s)(\omega)+R(\omega),
\end{align*}
where $R(\omega)$ is the remainder of the Taylor expansion, i.e.,
for some $\theta(\omega)\in[0,1]$ it holds that 
\[
R(\omega)=\frac{1}{2}\sum_{h,j\in A}f_{hj}\bigg(X(s-)(\omega)+\theta(\omega)p_{A}\big(\Delta X(s)(\omega)\big)\bigg)\Delta X_{h}(s)(\omega)\Delta X_{j}(s)(\omega).
\]
The term $f\left(X(s-)(\omega)+p_{A\setminus\{i\}}\big(\Delta X(s)(\omega)\big)\right)$
can be treated similarly. Since $i\in A$, it holds for some $C(\omega)>0$,
which does not depend on $s$ or $\theta(\omega)$, that 
\begin{align*}
Y^A_{i}(s)\leq C(\omega)\sum_{h,j\in A}|\Delta X_{h}(s)(\omega)\Delta X_{j}(s)(\omega)|.
\end{align*}
It follows by Inequality (\ref{eq:appyN2}) that 
\begin{equation}
\sum_{0<s\leq t}|Y^A_{i}(s)(\omega)|<\infty,\quad\omega\in\Omega\setminus N.\label{eq:sumYLem}
\end{equation}
Since $t$ was arbitrarily chosen, Inequality (\ref{eq:sumYLem}) implies
that $u\mapsto S_{i}^{A}(X)(u)(\omega)$, $\omega\in\Omega\setminus N$, is càdlàg and of finite variation on compacts.
Therefore, $S_{i}^{A}(X)$ is a semimartingale. 
\end{proof}

\begin{prop}
\label{prop:(Generalized-Itos-formula).}
Let $\pi\in\sigma_{d}$ and $f\in\mathcal{C}_{2}$ and $X\in\mathcal{X}^d$. For all $t\geq0$
it holds that
\begin{align*}
f(X(t))-f(X(0))= & \sum_{i=1}^{d}\bigg\{ I_{i}(t)+\frac{1}{2}I_{ii}(t)+\frac{1}{2}\sum_{\underset{i\neq j}{j=1}}^{d}I_{ij}(t)+S_{i}^{\pi}(t)\bigg\},
\end{align*}
where $I_{i}$ and $I_{ij}$ are defined in Eq.~(\ref{eq:Ii}) and $S_{i}^{\pi}$ is defined in Eq.~(\ref{eq:Spi_i}). 
\end{prop}

\begin{proof}
Since the series telescopes, we have that
\begin{align*}
 & f(X(s))-f(X(s-))\\
&=  \sum_{i=1}^{d}f\bigg(X(s-)+p_{\{j\,|\pi(j)\leq\pi(i)\}}\big(\Delta X(s)\big)\bigg)-f\bigg(X(s-)+p_{\{j\,|\pi(j)<\pi(i)\}}\big(\Delta X(s)\big)\bigg).
\end{align*}
By Inequality~(\ref{eq:sumYLem}),  it holds for any $t\geq0$ that
\begin{equation}
\sum_{i=1}^{d}S_{i}^{\pi}(X)(t)=\sum_{0<s\leq t}\sum_{i=1}^{d}Y^{\{j\,|\pi(j)\leq\pi(i)\}}_{i}(s)=S(t),\label{eq:sumSi_S}
\end{equation}
where $S$ is defined in Eq.~(\ref{eq:S}). The claim follows then by the classical Itô's formula. 
\end{proof}

Proposition \ref{prop:(Generalized-Itos-formula).} generalizes the
classical Itô's formula, because for any $\pi\in\sigma_{d}$ it holds
that $\sum_{i=1}^{d}S_{i}^{\pi}(X)=S$, see Eq.~(\ref{eq:sumSi_S}).

\begin{defn}
\label{def:ito_decomposition}Let $\lambda_{ij}\in[0,1]$ for $i,j=1,...,d$.
Let $\mu_{\pi}\in[0,1]$ for $\pi\in\sigma_{d}$. The decomposition
\begin{align*}
\delta^{\text{Itô}}:\mathcal{M}(\mathcal{C}_{2})\times\mathcal{X}^{d} & \to\mathcal{X}^{d},\quad(F,X)\mapsto(\delta_{1}^{\text{Itô}}(F,X),...,\delta_{d}^{\text{Itô}}(F,X)),
\end{align*}
where
\[
\delta_{i}^{\text{Itô}}(F,X)=I_{i}+\frac{1}{2}I_{ii}+\sum_{\underset{j\neq i}{j=1}}^{d}\lambda_{ij}I_{ij}+\sum_{\pi\in\sigma_{d}}\mu_{\pi}S_{i}^{\pi}(X),\quad i=1,...,d,
\]
is called \emph{Itô decomposition with parameters $(\lambda_{ij})_{i,j=1,...,d}$
and $(\mu_{\pi})_{\pi\in\sigma_{d}}$.}
\end{defn}
The definition of the Itô decomposition is motivated by Proposition \ref{prop:(Generalized-Itos-formula).} and the normalization axiom:
Below Eq.~(\ref{eq:Ito_con}) we already argued that $I_{i}$ and
$I_{ii}$ should be attributed to $X_{i}$ in order to satisfy the
normalization axiom. If parts of the interaction effect $I_{ij}$
were assigned to the contribution of $X_{h}$ for $h\notin\{i,j\}$,
the decomposition would no longer be normalized. Therefore, only the risk factors $X_{i}$ and $X_{j}$ are assigned shares $\lambda_{ij}$ and $\lambda_{ji}$ of the interaction effect $I_{ij}$.

Note that $S_{i}^{\pi}(X)$ contains only jumps in the $i$-th component.
If $S_{i}^{\pi}(X)$ were assigned to the contribution of some $X_{j}$,
$j\neq i$, the normalization axiom would be violated if $X_{j}$ is
constant. Therefore, $S_{i}^{\pi}$ should be assigned
to the contribution of $X_{i}$. Since there are $d!$ different ways
to decompose the jumps without violating neither the normalization
axiom nor the exactness axiom, we propose to assign to $X_{i}$ a weighted
average of all $S_{i}^{\pi}(X)$, $\pi\in\sigma_{d}$.
\begin{rem}
\label{rem:special_case_it}Since each Itô decomposition is linear
in the first argument $F$, a portfolio can be decomposed by decomposing
each individual instrument.
\end{rem}

We recall some special members of the family of Itô decompositions,
namely the IASU and the $d!$ different ISU decompositions,
which were introduced in \citet{jetses2022general}. All Itô decompositions
are normalized. We will prove that the IASU decomposition is the only
Itô decomposition that is exact and symmetric. We will also see that the
ISU decomposition is closely related to the IASU decomposition and
that the IASU decomposition is the limiting case of the well-known
ASU decomposition (also known as Shapley value), which is defined
over a discrete time grid in Section \ref{sec:SU,-OAT-and}.
\begin{defn}
\label{def:IASU}The \emph{IASU (infinitesimal averaged updating)
decomposition} $\delta^{\text{IASU}}:\mathcal{M}(\mathcal{C}_{2})\times\mathcal{X}^{d}\to\mathcal{X}^{d}$
is defined by
\[
\delta_{i}^{\text{IASU}}(F,X)=I_{i}+\frac{1}{2}\sum_{j=1}^{d}I_{ij}+\frac{1}{d!}\sum_{\pi\in\sigma_{d}}S_{i}^{\pi}(X),\quad i=1,...,d.
\]
\end{defn}

\begin{rem}\label{RemarkSimplificationIASU}
\label{rem:FIASU_2^d} The Itô decompositions are overparameterised: based on   
Eq.~(\ref{eq: ito decomposition xi}) in Lemma \ref{lem: parameterisation ito}, we can represent the IASU decomposition as
\[
\delta_{i}^{IASU}(F,X)=I_{i}+\frac{1}{2}\sum_{j=1}^{d}I_{ij}+\sum_{\substack{A\subseteq\{1,\dots,d\}\\
i\in A
}
}S_{i}^{A}(X)\xi_{i,A},
\]
where
\begin{equation}
\xi_{i,A}:=\sum_{\substack{\pi\in\sigma_{d}\\
\{j|\pi(j)\leq\pi(i)\}=A
}
}\frac{1}{d!}=\frac{(|A|-1)!(d-|A|)!}{d!}.\label{eq:xi}
\end{equation}
Hence, the computational cost to obtain $\delta_{i}^{\text{IASU}}$
can be reduced from $\mathcal{O}(d!)$ to $\mathcal{O}(2^{d-1})$ for $d\rightarrow\infty$. 
\end{rem}

\begin{defn}
\label{def:ISU}Let $\pi\in\sigma_{d}$. The \emph{ISU (infinitesimal
updating) decomposition $\delta^{\text{ISU},\pi}:\mathcal{M}(\mathcal{C}_{2})\times\mathcal{X}^{d}\to\mathcal{X}^{d}$
with updating order $\pi$} is defined by
\[
\delta_{i}^{\text{ISU},\pi}(F,X)=I_{i}+\frac{1}{2}I_{ii}+\sum_{\underset{\pi(j)<\pi(i)}{j=1}}^{d}I_{ij}+S_{i}^{\pi}(X),\quad i=1,...,d.
\]
\end{defn}

\begin{thm}
\label{thm:IASU}Every Itô decomposition that is symmetric and exact
is indistinguishable from the IASU decomposition. The IASU decomposition is
related to the ISU decomposition by
\begin{equation}
\delta_{i}^{\text{IASU}}(F,X)=\frac{1}{d!}\sum_{\pi\in\sigma_{d}}\delta_{i}^{\text{ISU},\pi}(F,X),\quad i=1,...,d,\quad X\in\mathcal{X}^d, \quad F\in\mathcal{M}(\mathcal{C}_2).\label{eq:IASU_AvISU}
\end{equation}
\end{thm}

\begin{proof}
The proof of Theorem \ref{thm:IASU} can be found in Appendix \ref{sec:Proof-for-IASU}.
\end{proof}

The next theorem shows that the curse of dimensionality
of the IASU decomposition can be broken if there are no simultaneous
jumps.
\begin{thm}
\label{thm:IASU_nsj}Let $X\in\mathcal{X}^d$ and $F\in\mathcal{M}(\mathcal{C}_2)$. If $\Delta X_{h}\Delta X_{j}=0$ for all $h,j \in \{1,\ldots, d\}$ with  $h\neq j$, 
then
\begin{align}
\delta_{i}^{\text{IASU}}(F,X)= & \frac{1}{2}\big(\delta_{i}^{\text{ISU},\pi}(F,X)+\delta_{i}^{\text{ISU},\pi^{\prime}}(F,X)\big),\quad i=1,...,d,\label{eq:IASU_2ISU}
\end{align}
for any  $\pi\in\sigma_{d}$  and $\pi^{\prime}=d+1-\pi$. 
\end{thm}

\begin{proof}
Let $0<s<\infty$. In the case of $\Delta X_{i}(s)=0$, we have that
\begin{align*}
 & f\left(X(s-)+p_{\{j\,|\pi(j)\leq\pi(i)\}}\big(\Delta X(s)\big)\right)-f\left(X(s-)+p_{\{j\,|\pi(j)<\pi(i)\}}\big(\Delta X(s)\big)\right)\\
&=  f\left(X(s-)+p_{\{j\,|\pi(j)<\pi(i)\}}\big(\Delta X(s)\big)\right)-f\left(X(s-)+p_{\{j\,|\pi(j)<\pi(i)\}}\big(\Delta X(s)\big)\right)\\
&=  0.
\end{align*}
In the case of $\Delta X_{i}(s)\neq0$, it holds that $X_{j}(s)=X_{j}(s-)$
for all $j\neq i$, and hence, 
\begin{align*}
 & f\left(X(s-)+p_{\{j\,|\pi(j)\leq\pi(i)\}}\big(\Delta X(s)\big)\right)-f\left(X(s-)+p_{\{j\,|\pi(j)<\pi(i)\}}\big(\Delta X(s)\big)\right)\\
 & =f\left(X(s)\right)-f\left(X(s-)\right).
\end{align*}
Hence, for $\pi\in\sigma_{d}$ and $i=1,...,d$ it holds that
\begin{align}
\delta_{i}^{\text{ISU},\pi}= & I_{i}+\frac{1}{2}I_{ii}+\sum_{\underset{\pi(j)<\pi(i)}{j=1}}^{d}I_{ij}+\sum_{\underset{\Delta X_{i}(s)\neq0}{0<s\leq\cdot}}\big\{ f\left(X(s)\right)-f\left(X(s-)\right)-f_{i}\big(X(s-)\big)\Delta X_{i}(s)\big\}.\label{eq:nosJ}
\end{align}
Due to Eqs.~(\ref{eq:IASU_AvISU}, \ref{eq:sum a_ij}), we have that
\begin{align}
\delta_{i}^{\text{IASU}}(F,X)= & I_{i}+\frac{1}{2}\sum_{j=1}^{d}I_{ij}+\sum_{\underset{\Delta X_{i}(s)\neq0}{0<s\leq\cdot}}\left\{ f\left(X(s)\right)-f\left(X(s-)\right)-f_{i}\big(X(s-)\big)\Delta X_{i}(s)\right\} .\label{eq:IASU_nosJ}
\end{align}
Let $\delta_{i}^{\text{ISU},\pi}$ be the ISU decomposition with updating
order $\pi\in\sigma_{d}$ and define $\pi^{\prime}(i)=d+1-\pi(i)$, $i=1,...,d$.
Note that 
\begin{equation}
\sum_{\underset{\pi(j)<\pi(i)}{j=1}}^{d}+\sum_{\underset{\pi\prime(j)<\pi\prime(i)}{j=1}}^{d}=\sum_{\underset{\pi(j)<\pi(i)}{j=1}}^{d}+\sum_{\underset{\pi(j)>\pi(i)}{j=1}}^{d}=\sum_{\underset{i\neq j}{j=1}}^{d}.\label{eq:sums_pi}
\end{equation}
Eqs.~(\ref{eq:nosJ}, \ref{eq:IASU_nosJ}, \ref{eq:sums_pi}) imply
Eq.~(\ref{eq:IASU_2ISU}).
\end{proof}

\begin{rem}
Theorem \ref{thm:IASU_nsj} can be generalized to the case where some but
not all risk factors have simultaneous jumps. For example, suppose $d=3$ and
$\Delta X_{1}\Delta X_{j}=0$, $j\in\{2,3\}$ but possibly $\Delta X_{2}\Delta X_{3}\neq0$.
It is then easy to see that Eq.~(\ref{eq:IASU_2ISU}) still holds. Or, if $d=4$
and $\Delta X_{1}\Delta X_{j}=0$, $j\in\{2,3,4\}$, the IASU decomposition
can be written as a weighted average of four ISU decompositions instead of eight ISU decompositions, which would be necessary if all risk factors had simultaneous jumps.
\end{rem}

\begin{cor}
\label{cor:no_interacion}Let $X\in\mathcal{X}^d$ and $F\in\mathcal{M}(\mathcal{C}_2)$. If $[X_{h},X_{j}]=0$ for all $h,j \in \{1, \ldots,d\}$ with  $h\neq j$, then
\begin{align*}
\delta_{i}^{\text{IASU}}(F,X)= & \delta_{i}^{\text{ISU},\pi}(F,X),\quad i=1,...,d,
\end{align*}
where $\pi\in\sigma_{d}$ is arbitrary. 
\end{cor}

\begin{proof}
The assumption $[X_{i},X_{j}]=0$ for \emph{$i\neq j$} implies $\Delta X_{i}\Delta X_{j}=\Delta[X_{i},X_{j}]=0$.
Therefore, $S_{i}^{\pi_{1}}=S_{i}^{\pi_{2}}$, $\pi_{1},\pi_{2}\in\sigma_{d}$,
see the proof of Theorem \ref{thm:IASU_nsj}. The assertion follows directly
from the Definitions \ref{def:IASU} and \ref{def:ISU}.
\end{proof}
\begin{example}
\label{exa:IASU_finite_var}How does the IASU decomposition treat
simultaneous jumps? Let $d=2$ and assume that $X=(X_{1},X_{2})$
is a pure-jump semimartingale of finite variation. Then the IASU decomposition
is given by 
\begin{align*}
\delta_{1}^{\text{IASU}}(F,X)=\frac{1}{2} & \sum_{0<s\leq\cdot}\bigg\{\big\{ f\big(X(s),X_{2}(s-)\big)-f\big(X(s-)\big)\big\}+\big\{f\big(X(s)\big)-f\big(X_{1}(s-),X_{2}(s)\big)\big\}\bigg\},\\
\delta_{2}^{\text{IASU}}(F,X)=\frac{1}{2} & \sum_{0<s\leq\cdot}\bigg\{ \big\{f\big(X(s)\big)-f\big(X_{1}(s),X_{2}(s-)\big)\big\}+\big\{f\big(X_{1}(s-),X_{2}(s)\big)-f\big(X(s-)\big)\big\}\bigg\}.
\end{align*}
The latter formulas are averages of sequential updates from time point
$s-$ to time point $s$. 
\end{example}

\begin{example}
We decompose the P\&L of the portfolio $P=X_{1}X_{2}$ of a foreign
stock, where $X_{1}$ is the foreign exchange rate and $X_{2}$ is
the stock price in the foreign currency. The instantaneous P\&L of
the foreign stock in domestic currency is given by 
\[
dP(t)=X_{1}(t-)dX_{2}(t)+X_{2}(t-)dX_{1}(t)+d[X_{1},X_{2}](t),
\]
i.e., it can be decomposed into the variation of the exchange rate,
variation of the stock price and interaction effects, compare with
\citet{mai2022}. The IASU decomposition equally distributes the interaction effect between $\delta_{1}^{\text{IASU}}$
and $\delta_{2}^{\text{IASU}}$. To see this, observe that
\begin{align*}
\delta_{1}^{\text{IASU}}(F,X) & =\int_{0}^{\cdot}X_{2}(s-)dX_{1}(s)+\frac{1}{2}[X_{1},X_{2}]^{c}\\
 & \quad+\frac{1}{2}\sum_{0<s\leq\cdot}\bigg\{\big\{ X_{1}(s)X_{2}(s-)-X_{1}(s-)X_{2}(s-)\big\}\\
 & \quad+\big\{X_{1}(s)X_{2}(s)-X_{1}(s-)X_{2}(s)\big\}-2X_{2}(s-)\big(X_{1}(s)-X_{1}(s-)\big)\bigg\}\\
 & =\int_{0}^{\cdot}X_{2}(s-)dX_{1}(s)+\frac{1}{2}[X_{1},X_{2}],
\end{align*}
where $F(X)=X_{1}X_{2}$. For $\delta_{2}^{\text{IASU}}$ the reasoning is similar.
\end{example}

\section{\label{sec:SU,-OAT-and}SU and ASU decompositions and their limits}

The time-dynamic SU and ASU decompositions are defined on discrete time grids,
see \citet{jetses2022general} and \citet{christiansen2022decomposition}.
A light introduction to the SU decomposition can be found in \citet{cadoni2014internal}. In this section, we recall the
definitions of these decompositions and we provide sufficient conditions
such that the SU and the ASU decompositions converge to the ISU and
IASU decompositions, respectively, as the mesh size of the discrete
time grid converges to zero. We recall the following definition from
\citet[p. 64]{protter2005stochastic}.
\begin{defn}
An infinite sequence of finite stopping times $0=s_{0}<s_{1}<s_{2}<...$
such that $\sup_{k}s_{k}=\infty$ a.s.~is called an \emph{unbounded random
partition}. A sequence $(\gamma_n)_{n \in \mathbb{N}}$ of unbounded random partitions $\gamma_n =\{0=s_{0}^{n}<s_{1}^{n}<s_{2}^{n}<...\}$  is said to \emph{tend to the identity} if $\sup_{k}|s_{k+1}^{n}-s_{k}^{n}|\to0$
a.s. for $n\to\infty$. 
\end{defn}

\begin{defn}
\label{def:SU}Let $\gamma=\{0=s_{0}<s_{1}<\dots\}$ be an unbounded
random partition. The \emph{SU} \emph{(sequential updating)} \emph{decomposition
$\delta^{\text{\text{SU},\ensuremath{\pi},\ensuremath{\gamma}}}:\mathcal{M}\times\mathcal{X}^{d}\to\mathcal{X}^{d}$}
\emph{with updating order} $\pi\in\sigma_{d}$ is defined by
\begin{align}
\delta_{i}^{\text{SU},\pi,\gamma}(F,X)=  \sum_{l=0}^{\infty}\bigg\{& F\left(X^{s_{l}}+p_{\{j\,|\pi(j)\leq\pi(i)\}}\left(X^{s_{l+1}}-X^{s_{l}}\right)\right)\nonumber \\
 & -F\left(X^{s_{l}}+p_{\{j\,|\pi(j)<\pi(i)\}}\left(X^{s_{l+1}}-X^{s_{l}}\right)\right)\bigg\}.\label{eq:SU_gen}
\end{align}
\end{defn}

In words, divide the time horizon $[0,t]$ into finitely many sub-intervals, and to obtain the contribution of $X_{i}$, fix all risk factors at the
beginning $s_{l}$ or the end $s_{l+1}$ of each sub-interval (depending
on the updating order $\pi$) and allow  only $X_{i}$ to vary between
$s_{l}$ and $s_{l+1}$.

\begin{prop}
The decomposition $\delta^{\text{\text{SU},\ensuremath{\pi},\ensuremath{\gamma}}}:\mathcal{M}\times\mathcal{X}^{d}\to\mathcal{X}^{d}$ is well defined by  Eq.~\eqref{eq:SU_gen} and exact. The sum in Eq.~\eqref{eq:SU_gen} evaluated at $t\in[0,\infty)$ is a.s. finite.
\end{prop}

\begin{proof}
Let $X\in\mathcal{X}^d$,  $F\in\mathcal{M}$,  $\pi\in\sigma_{d}$,  $n\in\mathbb{N}$, and $t\geq0$. By $x\land y$ we denote the minimum of two real numbers $x$ and
$y$. Using Eq.~(\ref{eq:FX^s})
twice, we get
\begin{align}
\delta_{i}^{\text{SU},\pi,\gamma}(F,X)(t\land s_{n})=\sum_{l=0}^{\infty}\bigg\{ & F\left(X^{s_{l}\land s_{n}}+p_{\{j\,|\pi(j)\leq\pi(i)\}}\left(X^{s_{l+1}\land s_{n}}-X^{s_{l}\land s_{n}}\right)\right)(t)\nonumber \\
 & -F\left(X^{s_{l}\land s_{n}}+p_{\{j\,|\pi(j)<\pi(i)\}}\left(X^{s_{l+1}\land s_{n}}-X^{s_{l}\land s_{n}}\right)\right)(t)\bigg\}\label{eq:deltaSU1}\\
=\sum_{l=0}^{n-1}\bigg\{ & F\left(X^{s_{l}}+p_{\{j\,|\pi(j)\leq\pi(i)\}}\left(X^{s_{l+1}}-X^{s_{l}}\right)\right)(t\land s_{n})\nonumber \\
 & -F\left(X^{s_{l}}+p_{\{j\,|\pi(j)<\pi(i)\}}\left(X^{s_{l+1}}-X^{s_{l}}\right)\right)(t\land s_{n})\bigg\}\label{eq:deltaSU2}
\end{align}
since all addends with $l\geq n$ on the right hand-side of Eq.~(\ref{eq:deltaSU1})
are equal to zero. By Eq.~(\ref{eq:deltaSU2}), for each $n$,
the process $\delta_{i}^{\text{SU},\pi,\gamma}(F,X)$ stopped at $s_{n}$
is a finite sum of semimartingales and hence a semimartingale. By
\citet[Part II, Sec. 2]{protter2005stochastic} and since $s_{n}\to\infty$
a.s. for $n\to\infty$, the process $\delta_{i}^{\text{SU},\pi,\gamma}(F,X)$ is a semimartingale
and the decomposition $\delta^{\text{\text{SU},\ensuremath{\pi},\ensuremath{\gamma}}}$
is therefore well defined. The fact that $s_{n}\to\infty$ a.s. implies that the sum in Eq.~\eqref{eq:SU_gen} evaluated at $t$ is a.s. finite. We show exactness: 
Let $x\in\mathbb{R}^{d}$.
Since
\[
p_{\{j\,|\pi(j)\leq\pi(\pi^{-1}(d))\}}(x)=x\quad\text{and}\quad p_{\{j\,|\pi(j)<\pi(\pi^{-1}(1))\}}(x)=0,
\]
we have for any $t\in[0,\infty)$ and $n\in\mathbb{N}$ by Eq.~(\ref{eq:deltaSU2}) that
\begin{align*}
 & \sum_{i=1}^{d}\delta_{i}^{\text{SU},\pi,\gamma}(F,X)(t\land s_{n})\\
 & =\sum_{\underset{i\neq\pi^{-1}(d)}{i=1}}^{d}\sum_{l=0}^{n-1}F\left(X^{s_{l}}+p_{\{j\,|\pi(j)\leq\pi(i)\}}\left(X^{s_{l+1}}-X^{s_{l}}\right)\right)(t\land s_{n})+\sum_{l=0}^{n-1}F\left(X^{s_{l+1}}\right)(t\land s_{n})\\
 & \qquad-\sum_{\underset{i\neq\pi^{-1}(1)}{i=1}}^{d}\sum_{l=0}^{n-1}F\left(X^{s_{l}}+p_{\{j\,|\pi(j)<\pi(i)\}}\left(X^{s_{l+1}}-X^{s_{l}}\right)\right)(t\land s_{n})-\sum_{l=0}^{n-1}F\left(X^{s_{l}}\right)(t\land s_{n}).
\end{align*}
For each $i\in\{1,...,d\}\setminus\{\pi^{-1}(d)\}$ there is exactly
one $k\in\{1,...,d\}\setminus\{\pi^{-1}(1)\}$ such that 
\[
p_{\{j\,|\pi(j)\leq\pi(i)\}}(x)=p_{\{j\,|\pi(j)<\pi(k)\}}(x),
\]
since $\pi(k)=\pi(i)+1$ if and only if $k=\pi^{-1}\big(\pi(i)+1\big)$.
Thus, we get 
\begin{align*}
\sum_{i=1}^{d}\delta_{i}^{\text{SU},\pi,\gamma}(F,X)(t\land s_{n}) & =\sum_{l=0}^{n-1}F\left(X^{s_{l+1}}\right)(t\land s_{n})-\sum_{l=0}^{n-1}F\left(X^{s_{l}}\right)(t\land s_{n})\\
 & =F\left(X^{s_{n}}\right)(t\land s_{n})-F\left(X^{s_{0}}\right)(t\land s_{n})\\
 & =F(X)(t\land s_{n})-F(X)(0).
\end{align*}
Since $t$ and $n$ were arbitrary and $s_{n}\to\infty$ a.s., the decomposition $\delta^{\text{\text{SU},\ensuremath{\pi},\ensuremath{\gamma}}}$
is exact. To see the last point, note that two processes with càdlàg paths are indistinguishable if they are modifications.
\end{proof}

\begin{example}
\label{exa:SU_d2}Assume $d=2$. The SU decomposition with respect
to $\gamma$ defines $d!=2$ decompositions, namely $\delta^{\text{SU},id,\gamma}(F,X)$
and $\delta^{\text{SU},\varrho,\gamma}(F,X)$ with $\varrho(1)=2$
and $\varrho(2)=1$, by 
\begin{align*}
\delta_{1}^{\text{SU},id,\gamma}(F,X) & =\sum_{l=0}^{\infty}\left\{ F\left(X_{1}^{s_{l+1}},X_{2}^{s_{l}}\right)-F\left(X_{1}^{s_{l}},X_{2}^{s_{l}}\right)\right\} ,\\
\delta_{2}^{\text{SU},id,\gamma}(F,X) & =\sum_{l=0}^{\infty}\left\{ F\left(X_{1}^{s_{l+1}},X_{2}^{s_{l+1}}\right)-F\left(X_{1}^{s_{l+1}},X_{2}^{s_{l}}\right)\right\} 
\end{align*}
and
\begin{align*}
\delta_{1}^{\text{SU},\varrho,\gamma}(F,X) & =\sum_{l=0}^{\infty}\left\{ F\left(X_{1}^{s_{l+1}},X_{2}^{s_{l+1}}\right)-F\left(X_{1}^{s_{l}},X_{2}^{s_{l+1}}\right)\right\} ,\\
\delta_{2}^{\text{SU},\varrho,\gamma}(F,X) & =\sum_{l=0}^{\infty}\left\{ F\left(X_{1}^{s_{l}},X_{2}^{s_{l+1}}\right)-F\left(X_{1}^{s_{l}},X_{2}^{s_{l}}\right)\right\} .
\end{align*}
\end{example}

\begin{defn}
\label{def:ASU}Let $\gamma=\{0=s_{0}<s_{1}<\dots\}$ be an unbounded
random partition. The \emph{ASU (averaged sequential updating)} \emph{decomposition
$\delta^{\text{\text{ASU},\ensuremath{\gamma}}}:\mathcal{M}\times\mathcal{X}^{d}\to\mathcal{X}^{d}$}
is defined by 
\[
\delta_{i}^{\text{ASU},\gamma}(F,X)=\frac{1}{d!}\sum_{\pi\in\sigma_{d}}\delta_{i}^{\text{SU},\pi,\gamma}(F,X),\quad i=1,...,d.
\]
\end{defn}

\begin{rem}
\label{rem:ASU_2^d}As in \citet{shorrocks2013decomposition}, we
observe that
\[
\delta_{i}^{ASU,\gamma}(F,X)=\frac{1}{d!}\sum_{\pi\in\sigma_{d}}\delta_{i}^{\text{SU},\pi,\gamma}(F,X)=\sum_{\substack{A\subseteq\{1,\dots,d\}\\
i\in A
}
}\delta_{i}^{\text{SU},A,\gamma}(F,X)\xi_{i,A}
\]
for $\xi_{i,A}$ defined in Eq.~(\ref{eq:xi}) and 
\begin{align*}
\delta_{i}^{\text{SU},A,\gamma}(F,X) & :=\sum_{l=0}^{\infty}\bigg\{ F\left(X^{s_{l}}+p_{A}\left(X^{s_{l+1}}-X^{s_{l}}\right)\right)-F\left(X^{s_{l}}+p_{A\backslash\{i\}}\left(X^{s_{l+1}}-X^{s_{l}}\right)\right)\bigg\}.
\end{align*}
Thereby, the
computational cost to obtain $\delta_{i}^{ASU,\gamma}$ can be reduced
from $\mathcal{O}(d!)$ to $\mathcal{O}(2^{d-1})$.
\end{rem}

\begin{thm}
\label{thm:convergence}Let $\pi\in\sigma_{d}$ and $(\gamma_{n})_{n\in\mathbb{N}}$
be a sequence of unbounded random partitions tending to the identity. Let $F\in\mathcal{M}(\mathcal{C}_{2})$, 
$X\in\mathcal{X}^{d}$,  $t\geq0$ and $i\in\{1,...,d\}$. Then it holds for $n\to\infty$ that
\begin{align*}
\delta_{i}^{\text{SU},\pi,\gamma_{n}}(F,X)(t) & \overset{p}{\to}\delta_{i}^{\text{ISU},\pi}(F,X)(t),\\
\delta_{i}^{\text{ASU},\gamma_{n}}(F,X)(t) & \overset{p}{\to}\delta_{i}^{\text{IASU}}(F,X)(t).
\end{align*}
\end{thm}

\begin{proof}
The proof of Theorem \ref{thm:convergence} can be found in Appendix
\ref{subsec:Proof_convergence}.
\end{proof}
The next example shows that the assumption $F\in\mathcal{M}(\mathcal{C}_{2})$
in Theorem \ref{thm:convergence} is important to ensure convergence. 
\begin{example}
\label{exa:f not smooth}Let $Z$ be a stochastic process with independent
increments and $Z_{0}=0$. Jumps of $Z$ shall only occur at fixed
times $J=\{2-l^{-1},\,l\in\mathbb{N}\},$ and for each $l\in\mathbb{N}$,
the process jumps by $\pm l^{-1}$ with equal probability for upward
and downward movements. The process $Z$ stays constant between jumps.
Then, $Z$ is a semimartingale, see \citet{vcerny2021pure}. Let 
\[
f(x_{1},x_{2})=|x_{1}-x_{2}|,
\]
so $f\notin\mathcal{C}_{2}$. Let $(\gamma_{n})_{n\in\mathbb{N}}$ 
be a deterministic sequence of unbounded partitions $\gamma_n =\{ 0=s_0^n < s_1^n  < \cdots\}$  tending to the
identity such that $\gamma_n$ contains
the first $n$ smallest elements of $J$ but the intersection with
$(2-n^{-1},2]$ is empty. Assume that $X=(Z,Z)$. Then, for $t\geq2$
it follows that 
\begin{align*}
\sum_{l=0}^{\infty}\left\{ f\big(X_{1}^{s_{l+1}^{n}}(t),X_{2}^{s_{l}^{n}}(t)\big)-f\big(X_{1}^{s_{l}^{n}}(t),X_{2}^{s_{l}^{n}}(t)\big)\right\}  & =\sum_{l=1}^{n}l^{-1},
\end{align*}
which is divergent for $n\to\infty$, so the SU decomposition does
not converge for the map $F(X)(t):=f(X(t))$, $t\geq0$. 
\end{example}
How can the IASU decomposition be computed efficiently in practice?
If we naively approximate the integrals in Definition \ref{def:IASU}
numerically, then we may lose exactness of the decomposition, which
is undesirable in many applications. Theorem \ref{thm:convergence}
suggests using the ASU decomposition as an approximation of the IASU
decomposition. However, this becomes computationally infeasible for moderately large 
$d$, since the computational cost to obtain $\delta_{i}^{ASU,\gamma}$
scales like $\mathcal{O}(2^{d-1})$. The next corollary provides an elegant solution when there are no simultaneous jumps. 

\begin{defn}
Let $\gamma=\{0=s_{0}<s_{1}<\dots\}$ be an unbounded random partition.
The \emph{2SU (average of two sequential updating)} \emph{decomposition
$\delta^{\text{\text{2SU},\ensuremath{\pi},\ensuremath{\gamma}}}:\mathcal{M}\times\mathcal{X}^{d}\to\mathcal{X}^{d}$}
\emph{with updating order $\pi\in\sigma_{d}$} is defined by 
\[
\delta_{i}^{\text{\text{2SU},\ensuremath{\pi},\ensuremath{\gamma}}}(F,X)=\frac{1}{2}\left(\delta_{i}^{\text{SU},\pi,\gamma}(F,X)+\delta_{i}^{\text{SU},\pi^{\prime},\gamma}(F,X)\right),\quad i=1,...,d,
\]
where $\pi^{\prime}=d+1-\pi$.
\end{defn}

\begin{cor}
\label{cor:2SU}Let $\pi\in\sigma_{d}$ and $(\gamma_{n})_{n\in\mathbb{N}}$
be a sequence of unbounded random partitions tending to the identity. Let $F\in\mathcal{M}(\mathcal{C}_{2})$,
$X\in\mathcal{X}^{d}$, $i\in\{1,...,d\}$ and $t\geq0$. 
\begin{description}
	\item [{i)}] If $\Delta X_{h}\Delta X_{j}=0$ for all $h,j \in \{1,\ldots, d\}$ with  $h\neq j$, then
	\begin{align*}
		\delta_{i}^{\text{2SU},\pi,\gamma_{n}}(F,X)(t) & \overset{p}{\to}\delta_{i}^{\text{IASU}}(F,X)(t),\quad n\to\infty.
	\end{align*}
	\item [{ii)}] If $[X_{h},X_{j}]=0$  for all $h,j \in \{1,\ldots, d\}$ with  $h\neq j$, then 
	\[
	\delta_{i}^{\text{SU},\pi,\gamma_{n}}(F,X)(t)\overset{p}{\to}\delta_{i}^{\text{IASU}}(F,X)(t),\quad n\to\infty.
	\]
\end{description}
\end{cor}
\begin{proof}
If $\Delta X_{h}\Delta X_{j}=0$, $h\neq j$, it holds by Theorem \ref{thm:IASU_nsj}
that $\delta_{i}^{\text{IASU}}(F,X)=\frac{1}{2}\big(\delta_{i}^{\text{ISU},\pi}(F,X)+\delta_{i}^{\text{ISU},\pi^{\prime}}(F,X)\big)$,
which is the limit of $\delta_{i}^{2\text{SU},\pi,\gamma_{n}}(F,X)$
by Theorem \ref{thm:convergence}. If $[X_{h},X_{j}]=0$, $h\neq j$,
apply Corollary \ref{cor:no_interacion} and Theorem \ref{thm:convergence}.
\end{proof}
In particular, the 2SU decomposition with arbitrary updating order
$\pi$ is exact and approximates the IASU decomposition when the risk
factors do not have simultaneous jumps. In this case, the computationally
expensive averaging to obtain the ASU decomposition can be omitted
and the computational complexity to approximate $\delta_{i}^{\text{IASU}}$
decreases from $\mathcal{O}(2^{d-1})$ to $\mathcal{O}(1)$. Theorem \ref{thm:convergence}
and Corollary \ref{cor:2SU} are also illustrated in Figure \ref{fig:con_to_IASU}.

\begin{figure}[H]
\[
\begin{array}{ccccc}
 &  & \boxed{\delta_{i}^{\text{ASU},\gamma}}\\
\\
 &  & \Big\downarrow \text{\scriptsize $p$}\\
\\ 
\boxed{\delta_{i}^{\text{2SU},\pi,\gamma}} & \overset{p}{\xrightarrow{\hspace*{0.5cm}}} & \boxed{\delta_{i}^{\text{IASU}}} & \overset{p}{\xleftarrow{\hspace*{0.5cm}}} & \boxed{\delta_{i}^{\text{SU},\pi,\gamma}}\\
\text{ if }\Delta X_{h}\Delta X_{j}=0,\,h\neq j &  &  &  & \text{ if }[X_{h},X_{j}]=0,\,h\neq j
\end{array}
\]

\caption{\label{fig:con_to_IASU}Overview of discrete approximations of the
IASU decomposition.}
\end{figure}

Last, we define the OAT decomposition.  To obtain the contribution of $X_{i}$, all risk factors
are fixed at the origin and only $X_{i}$ is allowed to change from
the beginning of a sub-interval to the end of that sub-interval.
\begin{defn}
\label{def:OAT}Let $\gamma=\{0=s_{0}<s_{1}<\dots\}$ be an unbounded
random partition. The \emph{OAT (one-at-a-time) decomposition $\delta^{\text{OAT,\ensuremath{\gamma}}}:\mathcal{M}\times\mathcal{X}^{d}\to\mathcal{X}^{d}$
}is defined by
\begin{align*}
\delta_{i}^{\text{OAT},\gamma}(F,X)= & \sum_{l=0}^{\infty}\left\{ F\left(X_{1}^{s_{l}},...,X_{i-1}^{s_{l}},X_{i}^{s_{l+1}},X_{i+1}^{s_{l}},...,X_{d}^{s_{l}}\right)-F\left(X^{s_{l}}\right)\right\} ,\quad i=1,...,d.
\end{align*}
\end{defn}

\begin{rem}
\label{rem:IOAT}The OAT decomposition is symmetric but in general not exact. Let $(\gamma_{n})_{n\in\mathbb{N}}$ be a sequence
of unbounded random partitions tending to the identity. For each $i\in\{1,...,d\}$ choose
a permutation $\pi_{i}\in\sigma_{d}$ such that $\pi_{i}(i)=1$. Then
 $\delta_{i}^{\text{OAT},\gamma_{n}}$ is indistinguishable from
 $\delta_{i}^{\text{SU},\pi_{i},\gamma_{n}}$. If
$F\in\mathcal{M}(\mathcal{C}_{2})$ then it holds by Theorem \ref{thm:convergence} for $t\geq0$
that
\begin{align*}
\delta_{i}^{\text{OAT},\gamma_{n}}(F,X)(t) & \overset{p}{\to}\delta_{i}^{\text{ISU},\pi_{i}}(F,X)(t),\quad i=1,...,d
\end{align*}
for $n\to\infty$. Thus, by Corollary \ref{cor:no_interacion},
the three decompositions principles OAT, SU (with arbitrary order $\pi\in\sigma_{d}$) 
and ASU are asymptotically indistinguishable if there are no interaction effects.
\end{rem}

\section{\label{sec:Applications}Applications}

Investment portfolios of financial institutions or insurance companies
may include instruments such as stocks, plain vanilla or callable
bonds, convertible bonds, inflation-linked bonds, contingent convertible
bonds (CoCos), basket options, foreign exchange options and structured
products. 
These instruments often depend on multiple risk factors such as different
foreign exchange rates, interest rates for different maturities, credit
spreads, inflation rate, some trigger activations for CoCos, multiple
equities and time decay. \citet{cadoni2014internal} also considered
defaults and rating changes as risk factors.

In order to obtain a P\&L attribution of such instruments, we propose
the IASU decomposition because it is exact, symmetric and normalized,
and it takes into account the whole paths of the risk factors, i.e.,
uses all available information. The last point also avoids inconsistencies
when reporting a P\&L attribution for different time grids, e.g.,
on an annual, quarterly, monthly and weekly basis. The IASU decomposition involves
a stochastic integral. To approximate the IASU decomposition, we propose
the ASU or 2SU decomposition with a sufficiently fine time grid, as
such an approximation is always an exact decomposition. The use of the 2SU decomposition is
theoretically justified when the risk factors do not have simultaneous
jumps.

In Section \ref{subsec:Decomposing-a-call}, we provide an exemplary decomposition of
a plain vanilla call option with stochastic interest rates on a foreign
stock. A change in the P\&L of this option can be explained by movements
in the stock, the yield curve, the foreign exchange rate and time
decay. Thus, there are $d=4$ risk factors. We analyze the unexplained
P\&L of the OAT decomposition, the range of the SU and 2SU decompositions
over all possible updating orders $\pi\in\sigma_{d}$ for different
time grids, and the convergence of the ASU decomposition to the IASU
decomposition.

Computing the ASU decomposition to approximate the IASU decomposition
becomes infeasible when the number of risk factors $d$ is moderately
large: For example, a plain vanilla bond paying coupons may depend
on $d$ yield curves. A basket option may depend on $d$ stocks. In
practice, $d=30$ is a common case for basket options, see \citet{grzelak2023efficient}.
In Section \ref{subsec:Decomposing-a-basket}, we decompose a digital
cash-or-nothing basket put option. We illustrate that it is impossible
to obtain the ASU decomposition in reasonable time when $d=30$ and
we show how the 2SU decomposition is able to break the curse of dimensionality.

\subsection{\label{subsec:Decomposing-a-call}Decomposing a call option with
stochastic interest rates}

In this section, we allocate the P\&L of the price of a plain vanilla
European call option with strike $K$ and maturity $T=10$ with stochastic
interest rates and foreign exchange exposure. The stock price $S$
is given by a Black-Scholes model with constant volatility $\sigma_{S}>0$
and with stochastic interest rates $r$. The dynamics under the risk
neutral measure are given by
\[
dS(t)=r(t)S(t)dt+\sigma_{S}S(t)dB_{S}(t)
\]
and
\[
dr(t)=\kappa(\eta-r(t))dt+\sigma_{r}dB_{r}(t)
\]
with constant volatility $\sigma_{r}>0$, long term mean $\eta\in\mathbb{R}$
and speed of mean reversion $\kappa>0$. Under the physical measure,
the stock has drift $\mu_{S}\in\mathbb{R}$ and the foreign exchange
rate $Y$ is assumed to follow a geometric Brownian motion with drift
$\mu_{Y}\in\mathbb{R}$ and volatility $\sigma_{Y}>0$ driven by the Brownian motion $B_Y$. The Brownian
motions are assumed to have correlations
\[
dB_{S}(t)dB_{r}(t)=\rho_{Sr}dt,\quad dB_{S}(t)dB_{Y}(t)=\rho_{SY}dt,\quad\text{and}\quad dB_{Y}(t)dB_{r}(t)=\rho_{Yr}dt.
\]
The time left to maturity is denoted by $\tau(t)=T-t$. The price
of the plain vanilla call option $p_{call}(t)$ at time $t$ is given
by a twice differentiable function $f:\mathbb{R}^{d}\rightarrow\mathbb{R}$,
see \citet{rabinovitch1989pricing}, i.e.,
\[
p_{call}(t)=f\big(S(t),r(t),Y(t),\tau(t)\big)\\
=:F(S,r,Y,\tau)(t),\quad t\geq0,
\]
with
\[
f(s,r,y,\tau)=ys\Phi\big(d_{+}(s,r,\tau)\big)-yKP(r,\tau)\Phi\big(d_{-}(s,r,\tau)\big),
\]
where $\Phi$ denotes the distribution function of a standard normal distribution and
\begin{align*}
	d_\pm(s,r,\tau)&=\frac{\log\left(\frac{s}{KP(r,\tau)}\right)\pm \frac{1}{2}v(\tau)}{\sqrt{v(\tau)}},\\
	v(\tau)&=\sigma_S^2\tau+\sigma_r^2\frac{\tau-2g_\kappa(\tau)+g_{2\kappa}(\tau)}{\kappa^2}-2\rho_{Sr}\sigma_S\sigma_r\frac{\tau-g_\kappa(\tau)}{\kappa},\\
	g_\kappa(\tau)&=\frac{1-e^{-\kappa \tau}}{\kappa}.
\end{align*}
The bond price $P(r,\tau)$ is given by
\[
P(r,\tau)=A(\tau)e^{-g_\kappa(\tau)r},
\]
where
\[
A(\tau)=\exp\bigg(\bigg(\eta+\frac{\sigma_{r}^{2}\lambda}{\kappa}-\frac{\sigma_{r}^{2}}{2\kappa^{2}}\bigg)\left(g_{\kappa}(\tau)-\tau\right)-\frac{1}{\kappa}\bigg(\frac{\sigma_{r}g_{\kappa}(\tau)}{2}\bigg)^2\bigg)
\]
and $\lambda$ denotes the market price of risk. For simplicity, we set the market price
of risk to zero and hence assume that the dynamics of $r$ under the physical and the risk neutral measure
are identical. \citet[Sec. 24.2]{bjork2009arbitrage} describes how to estimate the parameters for $r$
from market data.
We simulate $1000$ paths of the stock, interest rate and foreign
exchange rate under the physical measure over one year. For each path,
we decompose the price of the call option at time $t=1$ with respect to  the $d=4$ risk factors $X:=(S,r,Y,\tau)$. We use
the following parameters:
\[
K=S(0)=100,\quad\mu_{S}=0.05,\quad\sigma_{S}=0.4,\quad Y(0)=1.1,\quad\mu_{Y}=0,\quad\sigma_{Y}=0.05
\]
and
\[
r(0)=0.08,\quad\kappa=0.1,\quad\eta=0.05,\quad\sigma_{r}=0.01,\quad\rho_{Sr}=-0.7,\quad\rho_{SY}=-0.4,\quad\rho_{Yr}=0.7.
\]

By $\Delta F:=F(X)(1)-F(X)(0)$, we denote the P\&L
of the option over one year. Figure \ref{fig:unexp PL} shows the
relative unexplained P\&L of the OAT decomposition, i.e.,
\[
\frac{|\Delta F-\sum_{i=1}^{d}\delta_{i}^{OAT,\gamma}(F,X)(1)|}{|\Delta F|}.
\]
We use as time grids $\gamma$ annual, quarterly, monthly, weekly and daily time steps. As observed
in \citet{flaig2024empirical}, we also see that the unexplained P\&L of the OAT decomposition
is significant for all time grids. 

Figure \ref{fig:range SU} shows the relative range of the $d!$ SU
decompositions for the risk factor $S$, i.e,
\[
\max_{\pi\in\sigma_{d}}\bigg(\frac{\delta_{1}^{SU,\pi,\gamma}(F,X)(1)}{\delta_{1}^{IASU}(F,X)(1)}\bigg)-\min_{\pi\in\sigma_{d}}\bigg(\frac{\delta_{1}^{SU,\pi,\gamma}(F,X)(1)}{\delta_{1}^{IASU}(F,X)(1)}\bigg)
\]
and the relative range of the $\frac{d!}{2}$ 2SU decompositions for the risk factor $S$. The limiting IASU decomposition is approximated by an ASU decomposition with $10,000$ time steps per year. We observe that the range is significant for the SU decompositions  and insignificant for the 2SU decompositions.

The speed of convergence of the ASU to the IASU decomposition is illustrated in Figure \ref{fig:convergence ASU} for the risk factor $S$, i.e., we show the convergence of 
\[
\frac{\delta_{1}^{ASU,\gamma}(F,X)(1)}{\delta_{1}^{IASU}(F,X)(1)}
\]
to one when the partition $\gamma$ tends to the identity. Figures \ref{fig:range SU} and \ref{fig:convergence ASU} look similar for other risk factors.

In further numerical experiments, we calculate  the relative difference
between the ASU decomposition and the 2SU decompositions
\[
\left|\frac{\delta_{i}^{2SU,\pi,\gamma}(F,X)(1)-\delta_{i}^{ASU,\gamma}(F,X)(1)}{\delta_{i}^{IASU,\gamma}(F,X)(1)}\right|
\]
over all risk factors $i\in\{1,...,d\}$, time grids $\gamma$ and updating orders $\pi\in\sigma_d$, and observe values of less than $0.6\%$ in  $95\%$ of the simulations.  
In conclusion, we find that the ASU decomposition and the 2SU decompositions
are strongly dependent on the time grid, but using monthly or weekly
time steps instead of annual time steps significantly reduces the
deviation of the ASU and 2SU decompositions from the IASU decomposition.

\begin{figure}[H]
\centering{}\includegraphics[scale=0.5]{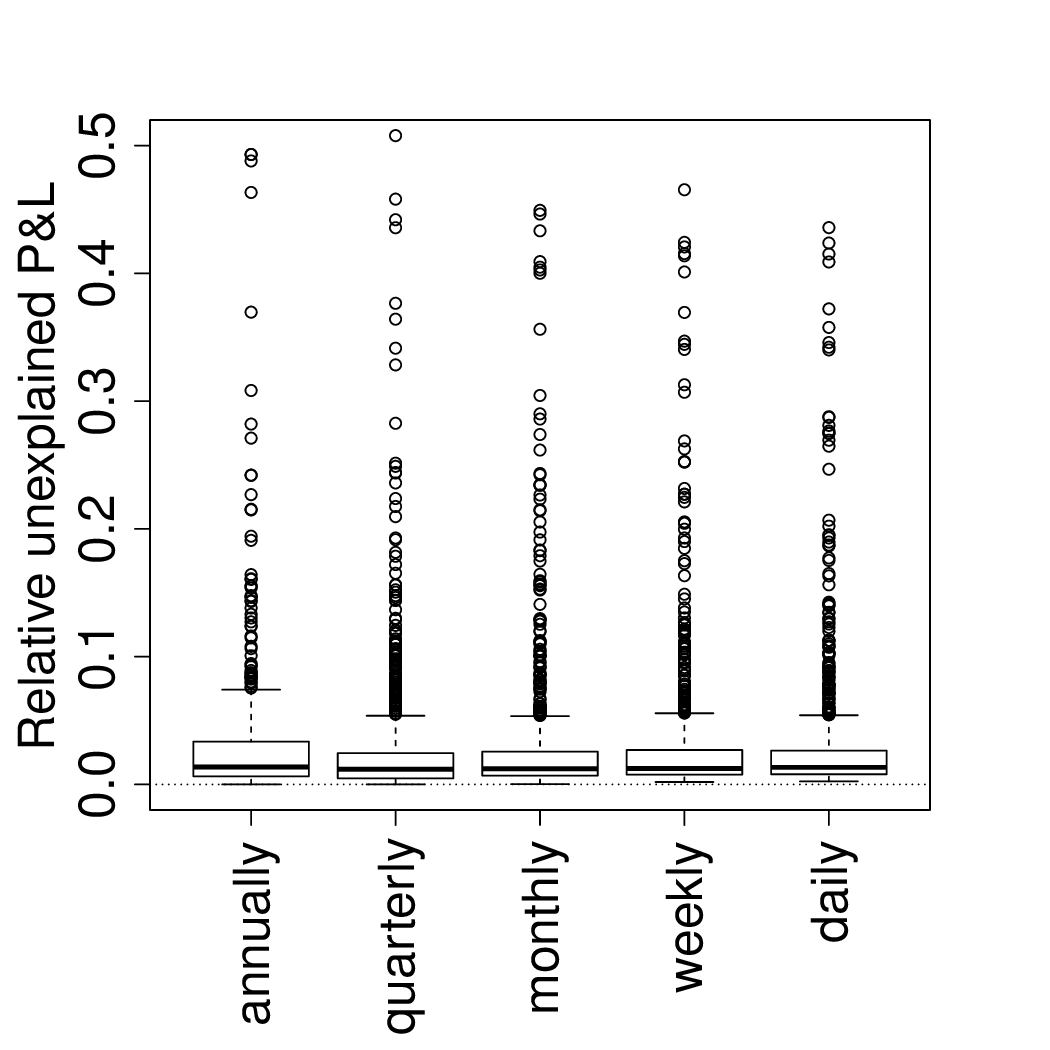}\caption{\label{fig:unexp PL}Relative unexplained P\&L for the OAT decomposition
of a plain vanilla call option in a foreign currency at time $t=1$
for different time grids.}
\end{figure}
\begin{figure}[H]
\centering{}\includegraphics[scale=0.5]{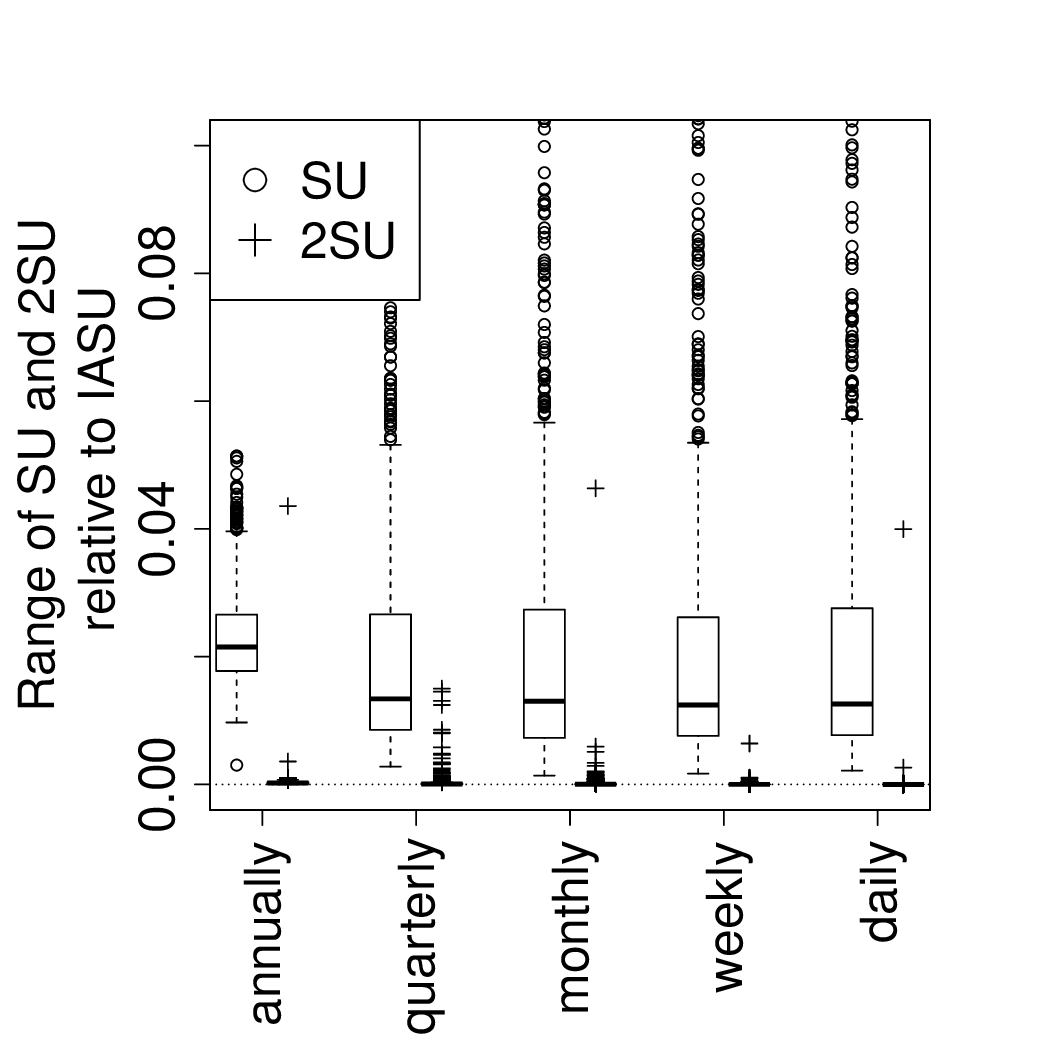}\caption{\label{fig:range SU}Relative range of all SU decompositions
and 2SU decompositions for the risk factor $S$.}
\end{figure}
\begin{figure}[H]
	\centering{}\includegraphics[scale=0.5]{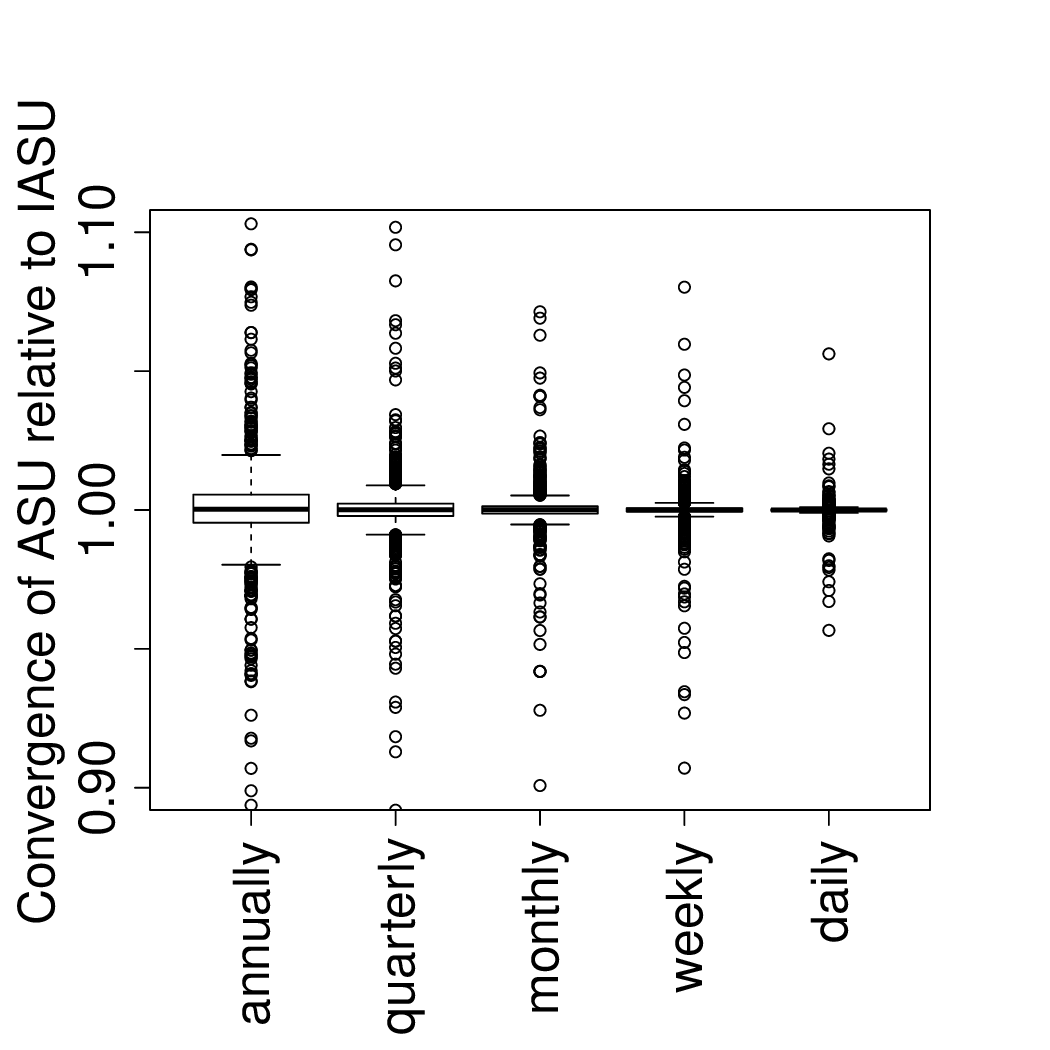}\caption{\label{fig:convergence ASU}Convergence of the ASU decomposition
	to the IASU decomposition for the risk factor $S$.}
\end{figure}

\subsection{\label{subsec:Decomposing-a-basket}Decomposing a basket option}

In this section, we compare the computational cost of obtaining a
one-year P\&L attribution of a basket option using a naive SU decomposition
with annual time grid with the computational cost of obtaining an
ASU and a 2SU decomposition based on a monthly time grid, respectively.
We consider $d$ risk factors: time decay and $d-1$ different
stocks. A digital cash-or-nothing basket put option pays $\$1$ at maturity
$T$ if $S_{1}(T)\leq K,\dots,S_{d-1}(T)\leq K$ and zero otherwise.
The stock prices are given by a Black-Scholes model. We set the interest
rate $r$ to zero. We set $S_{i}(0)=K=100$, $i=1,\dots,d-1$ and
$T=2$. The price of the option at time $t\in[0,T)$ is equal to $\Phi\big(\log(K),\dots,\log(K)\big)$,
where $\Phi$ is the distribution function of a $d-1$ dimensional normal distribution with location
\[
\left(\log\left(S_{1}(t)\right)-\big(r-\frac{1}{2}\sigma^{2}\big)(T-t),\dots,\log\left(S_{d-1}(t)\right)-\big(r-\frac{1}{2}\sigma^{2}\big)(T-t)\right)\in\mathbb{R}^{d-1}
\]
and covariance matrix $\Sigma(T-t)$, where we set $\sigma=0.2$, $\rho=0.5$
and
\[
\Sigma_{ij}=\begin{cases}
\sigma^{2}, & i=j\\
\rho\sigma^{2}, & i\neq j.
\end{cases}
\]
Basket options are often priced using Monte Carlo techniques, see
\citet{glasserman2004monte}. For moderate dimensions, many basket options can also be priced using faster Fourier techniques, see \citet{eberlein2010analysis}
and \citet{junike2024}. We compute $\Phi$ using a simple Monte Carlo simulation implemented in C++ with $100,000$ simulations. The experiments are performed on a laptop with Intel i7-11850H processor and 32 GB RAM.

Table \ref{tab:cpu-time} shows the CPU time needed to obtain $\Phi$ for $d\in \{4,15,30\}$. We measure CPU times by averaging over $100$ runs. Since in some cases the arguments of $\Phi$ to obtain a SU decomposition with a certain update order $\pi$ are the same for different contributions, we need to evaluate $\Phi$ only $dL+1$ times, where $L$ is the number of sub-intervals of $[0,T]$, to obtain the $d$ individual contributions. For example, the 2SU and ASU decompositions with a monthly time grid require $(12d+1)\cdot2$ and $(12d+1)\cdot2^{d-1}$ evaluations of $\Phi$, respectively.

Table \ref{tab:cpu-time} also shows the CPU time to compute the SU, ASU and 2SU decompositions. A naive SU decomposition based on an annual time grid is at most 24 times faster than a 2SU decomposition with a monthly time grid. The computational cost of the 2SU decomposition for each contribution
is dimension independent, except for the longer time required to evaluate $\Phi$. Compared to the ASU decomposition, the 2SU decomposition is $2^{d-2}$ times faster. The ASU decomposition cannot be computed in reasonable time for $d\geq30$. 
\begin{table}[H]
\centering{}
\begin{tabular}{|c|>{\centering\arraybackslash}p{3cm}|c|c|c|}  
\hline
& Number of evaluations of $\Phi$ & $d=4$ & $d=15$ & $d=30$
\tabularnewline
\hline
\hline
Evaluation of $\Phi$ & $1$ & 0.018 sec & 0.15 sec & 0.54 sec \tabularnewline
\hline
SU with annual grid & $d+1$ & (0.09 sec) & (2.4 sec) & (16.7 sec)
\tabularnewline
\hline
2SU with monthly grid & $(12d+1)\cdot 2$ & (1.76 sec) & (54.3 sec) & (390 sec)\tabularnewline
\hline
ASU with monthly grid & $(12d+1)\cdot 2^{(d-1)}$ & (7.06 sec) & (123.6 hours) & (3318.7 years)\tabularnewline
\hline
\end{tabular}
\caption{\label{tab:cpu-time}CPU time to compute the $d$ contributions of the SU, ASU and 2SU decompositions of a basket option over one year using different time grids. The CPU time of $\Phi$ is obtained from a Monte Carlo simulation. The CPU times in brackets are estimated using the CPU time of $\Phi$ and the known complexities of the three decompositions.}
\end{table}

\begin{rem}
	To reduce the computational time, it is possible to compute the $d$ contributions for the SU, 2SU and ASU decompositions in parallel, which would reduce the numerical effort by a factor of $d$. Furthermore, the sums for the SU, 2SU and ASU decompositions can also be parallelized. For example, for the 2SU decomposition we need to perform $2(dL+1)$ function evaluations to obtain all $d$ contributions. If a function evaluation takes $0.54$ sec in $d=30$ dimensions as in the Table \ref{tab:cpu-time}, the computation time for the 2SU decomposition with monthly time grid could be reduced from $390$ sec to about $0.54$ sec using $722$ cores for parallelization.
\end{rem}

\section{\label{sec:Conclusions}Conclusions}

We showed that the IASU decomposition is the only (up to indistinguishability) exact and symmetric decomposition in the family of Itô decompositions, which is a large class of normalized decompositions based on an extended version of Itô's formula. This axiomatic result, together with the fact that the IASU decomposition is grid-independent and considers the full paths of the risk basis, makes it a decomposition of choice from a theoretical perspective. In practice, the calculation of the IASU decomposition comes with two challenges: it involves stochastic integrals that must be approximated, and the computational effort explodes as the number of risk factors increases. 

We have shown that the IASU decomposition can be approximated by the ASU decomposition (which is always exact and symmetric) if we use a sufficiently fine time grid, but the ASU decomposition also suffers from the curse of dimensionality as the number of risk factors increases. For applications where different risk factors may have interactions but almost surely do not have simultaneous jumps, we have shown that the IASU decomposition is indistinguishable from the average of two ISU decompositions, thus breaking the curse of dimensionality. Therefore, from a theoretical point of view, the 2SU decomposition with sufficiently fine time steps is an appropriate approximation of the IASU decomposition.

Based on our own numerical experiments and the empirical analysis of  \cite{flaig2024empirical}, we recommend using monthly or even weekly time steps instead of annual time steps.

The additional computational cost of our two recommendations is moderate, but the theoretical properties of the decomposition are dramatically improved.

\section*{Acknowledgements}

We thank Bernd Buchwald and Andreas Märkert from Hannover Re, Solveig
Flaig from Deutsche Rück,  Julien Hambuckers from University of Liège,  Jan-Frederik Mai from XAIA Investment GmbH, Thorsten Schmidt from Universtität Freiburg and an anonymous referee for very helpful comments that improved this paper.

\appendix

\section{Appendix}

\subsection{Auxiliary results}
\begin{lem}
Let $i,j\in\{1,....,d\}$. Let $\pi,\eta\in\sigma_{d}$ and $x\in\mathbb{R}^{d}$.
Then it holds that
\begin{equation}
\eta^{-1}\bigg(p_{\{j\,|\pi(j)\leq\pi(\eta^{-1}(i))\}}\big(\eta(x)\big)\bigg)=p_{\{j\,|\pi(\eta^{-1}(j))\leq\pi(\eta^{-1}(i))\}}(x).\label{eq:eta p}
\end{equation}
\end{lem}

\begin{proof}
Let $k\in  \big\{j|\pi(j)\leq\pi(\eta^{-1}(i))\big\}$, which is equivalent to 
\begin{align*}
\eta(k)\in & \big\{j|\pi(\eta^{-1}(j))\leq\pi(\eta^{-1}(i))\big\}.
\end{align*}
Since $\big(\eta^{-1}(x)\big)_{\eta(k)}=x_{k}$ and $\big(\eta(x)\big)_{k}=x_{\eta(k)}$,
we obtain that
\begin{align*}
 \bigg(\eta^{-1}\big(p_{\{j\,|\pi(j)\leq\pi(\eta^{-1}(i))\}}\big(\eta(x)\big)\big)\bigg)_{\eta(k)}
 &= \bigg(p_{\{j\,|\pi(j)\leq\pi(\eta^{-1}(i))\}}\big(\eta(x)\big)\bigg)_{k}\\
 &= \big(p_{\{j\,|\pi(\eta^{-1}(j))\leq\pi(\eta^{-1}(i))\}}(x)\big)_{\eta(k)},
\end{align*}
which leads to Eq.~(\ref{eq:eta p}). 
\end{proof}
\begin{lem}
\label{lem: parameterisation ito} Let $\eta\in\sigma_{d}$, $i\in\{1,...,d\}$, $X\in\mathcal{X}^d$,
 $F\in\mathcal{M}(\mathcal{C}_{2})$ and $(\mu_{\pi})_{\pi\in\sigma_{d}}\subset[0,1]$. If $F(\eta(X))=F(X)$, then it
holds that 
\[
\sum_{\pi\in\sigma_{d}}\mu_{\pi}S_{\eta^{-1}(i)}^{\pi}\big(\eta(X)\big)=\sum_{\substack{A\subseteq\{1,\dots,d\}\\
i\in A
}
}S_{i}^{A}(X)\xi_{i,A,\eta}
\]
with

\begin{equation}
\xi_{i,A,\eta}:=\sum_{\substack{\pi\in\sigma_{d}\\
\{j|\pi(\eta^{-1}(j))\leq\pi(\eta^{-1}(i))\}=A
}
}\mu_{\pi}.\label{eq:xiiAeta}
\end{equation}
In particular, for an Itô decomposition $\delta$ with parameters $(\lambda_{ij})_{i,j=1,...,d}$ and $(\mu_{\pi})_{\pi\in\sigma_{d}}$, we have that
\begin{align}
\delta_{i}(F,X) & =I_{i}+\frac{1}{2}I_{ii}+\sum_{\underset{j\neq i}{j=1}}^{d}\lambda_{ij}I_{ij}+\sum_{\substack{A\subseteq\{1,\dots,d\}\\
i\in A
}
}S_{i}^{A}(X)\xi_{i,A,id}.\label{eq: ito decomposition xi}
\end{align}
 
\end{lem}

\begin{proof}
Let $\eta\in\sigma_d$ and $F(X)(t)=f(X(t))$, $t\geq0$, with $F(\eta(X))=F(X)$ for $X\in\mathcal{X}^{d}$.
Let $i\in\{1,...,d\}$. By Eq.~(\ref{eq:eta p}) it holds for $s>0$ that
\begin{align*}
 & f\bigg(\eta\big(X(s-)\big)+p_{\{j\,|\pi(j)\leq\pi(\eta^{-1}(i))\}}\big(\Delta\eta(X)(s)\big)\bigg)\\
 & \overset{\phantom{(\ref{eq:eta p})}}{=}f\bigg(\eta\bigg[X(s-)+\eta^{-1}\big[p_{\{j\,|\pi(j)\leq\pi(\eta^{-1}(i))\}}\big(\eta(\Delta X(s))\big)\big]\bigg]\bigg)\\
 &\overset{(\ref{eq:eta p})}{=} f\bigg(\eta\big[X(s-)+p_{\{j\,|\pi(\eta^{-1}(j))\leq\pi(\eta^{-1}(i))\}}(\Delta X(s))\big]\bigg)\\
 &\overset{\phantom{(\ref{eq:eta p})}}{=} f\bigg(X(s-)+p_{\{j\,|\pi(\eta^{-1}(j))\leq\pi(\eta^{-1}(i))\}}\big(\Delta X(s)\big)\bigg).
\end{align*}
The last equality follows from the symmetry of $f$. Similarly, if we 
replace ``$\leq$'' with ``$<$'', we get that
\[
f\bigg(\eta\big(X(s-)\big)+p_{\{j\,|\pi(j)<\pi(\eta^{-1}(i))\}}\big(\Delta\eta(X)(s)\big)\bigg)=f\bigg(X(s-)+p_{\{j\,|\pi(\eta^{-1}(j))<\pi(\eta^{-1}(i))\}}\big(\Delta(X)(s)\big)\bigg).
\]
Let $\eta\in\sigma_{d}$ and $f\in\mathcal{C}_{2}$. If $f(x)=f(\eta(x))$, $x\in\mathbb{R}^d$,
it is straightforward to see that for $x\in\mathbb{R}^{d}$ it holds
that
\begin{equation}
f_{i}(x)=f_{\eta^{-1}(i)}(\eta(x)),\quad f_{ij}(x)=f_{\eta^{-1}(i)\eta^{-1}(j)}(\eta(x))\quad\text{and}\quad(\eta(x))_{\eta^{-1}(i)}=x_{i}.\label{eq:f pi}
\end{equation}
Therefore it follows that 
\begin{equation}
S_{\eta^{-1}(i)}^{\pi}(\eta(X))=S_{i}^{\pi\circ\eta^{-1}}(X).\label{eq:SiS eta}
\end{equation}
Thus, similarly to \citet{shorrocks2013decomposition}, for any re-ordering $\eta(X)$ of the risk basis $X$ we can conclude
that
\begin{align*}
\sum_{\pi\in\sigma_{d}}\mu_{\pi}S_{\eta^{-1}(i)}^{\pi}\big(\eta(X)\big)\overset{(\ref{eq:SiS eta})}{=} & \sum_{\pi\in\sigma_{d}}\mu_{\pi}S_{i}^{\pi\circ\eta^{-1}}(X)\\
= & \sum_{\substack{A\subseteq\{1,\dots,d\}\\
i\in A
}
}\sum_{\substack{\pi\in\sigma_{d}\\
\{j|\pi(\eta^{-1}(j))\leq\pi(\eta^{-1}(i))\}=A
}
}\mu_{\pi}S_{i}^{\pi\circ\eta^{-1}}(X)\\
= & \sum_{\substack{A\subseteq\{1,\dots,d\}\\
i\in A
}
}S_{i}^{A}(X)\sum_{\substack{\pi\in\sigma_{d}\\
\{j|\pi(\eta^{-1}(j))\leq\pi(\eta^{-1}(i))\}=A
}
}\mu_{\pi}\\
= & \sum_{\substack{A\subseteq\{1,\dots,d\}\\
i\in A
}
}S_{i}^{A}(X)\xi_{i,A,\eta}.
\end{align*}
Eq.~(\ref{eq: ito decomposition xi}) follows directly for $\eta=id.$
\end{proof}
\begin{lem}
\label{lem:xi}Let $\delta$ be an Itô decomposition with parameters
$(\lambda_{ij})_{i,j=1,...,d}$ and $(\mu_{\pi})_{\pi\in\sigma_{d}}$. Let $i\in\{1,\dots,d\}$. If $\delta$ is symmetric and exact, it follows that
\begin{equation}
\xi_{i,A,id}=\xi_{\eta^{-1}(i),\eta^{-1}(A),id}\label{eq: xi symm}
\end{equation}
for any $\eta\in\sigma_{d}$, where $\xi_{i,A,id}$ is defined in
Eq.~(\ref{eq:xiiAeta}) and $\eta(A):=\{\eta(j):j\in A\}$.
Further, for any  $a\in\{1,\dots,d\}$   it holds that 
\begin{equation}
\sum_{j=1}^{d}\sum_{\substack{A\subseteq\{1,\dots,d\}\\
|A|=a,\,j\in A
}
}\xi_{j,A,id}=1.\label{eq:xi add}
\end{equation}
\end{lem}

\begin{proof}
First we show Eq.~(\ref{eq: xi symm}). Let $A\subseteq\{1,\dots,d$\}
with $i\in A$. Let $\pi,\eta\in\sigma_{d}$. Because of
\[
\{j|\pi(\eta^{-1}(j))\leq\pi(\eta^{-1}(i))\}=A\quad \Leftrightarrow \quad \{j|\pi(j)\leq\pi(\eta^{-1}(i))\}=\eta^{-1}(A),
\]
it holds that
\begin{align}
\xi_{i,A,\eta} & =\sum_{\substack{\pi\in\sigma_{d}\\
\{j|\pi(\eta^{-1}(j))\leq\pi(\eta^{-1}(i))\}=A
}
}\mu_{\pi}=\sum_{\substack{\pi\in\sigma_{d}\\
\{j|\pi(j)\leq\pi(\eta^{-1}(i))\}=\eta^{-1}(A)
}
}\mu_{\pi}=\xi_{\eta^{-1}(i),\eta^{-1}(A),id}.\label{eq:xi_eta xi_id}
\end{align}
Now let $f(x)=\prod_{j=1}^{d}x_{j}^{2}$ and $F(X)(t)=f(X(t))$, $t\geq0$, so that 
$F(X)=F(\pi(X))$, $\pi\in\sigma_{d}$. For $B\subseteq\{1,\dots,d\}$
with $i\in B$ and $t\geq0$, let
\[
X_{j}(t)=\begin{cases}
1_{[1,\infty)}(t), & j\in B\\
1_{[0,1)}(t), & j\notin B.
\end{cases}
\]
Then it follows that
\[
f\bigg(X(1-)+p_{A}\big(\Delta X(1)\big)\bigg)=\begin{cases}
1, & A=B\\
0, & A\neq B
\end{cases}
\]
and therefore
\[
S_{i}^{A}(X)(1)=\begin{cases}
1, & A=B\\
0, & A\neq B
\end{cases}
\]
for $A\subseteq\{1,\dots,d\}$ with $i\in A$. For $\eta\in\sigma_{d}$
it follows by Lemma \ref{lem: parameterisation ito} that
\begin{align*}
\delta_{\eta^{-1}(i)}\big(F,\eta(X)\big)(1)= & \sum_{\substack{A\subseteq\{1,\dots,d\}\\
i\in A
}
}S_{i}^{A}(X)(1)\xi_{i,A,\eta}=\xi_{i,B,\eta}.
\end{align*}
Since $\delta$ is symmetric, we have that 
\begin{align*}
\xi_{\eta^{-1}(i),\eta^{-1}(B),id}\overset{(\ref{eq:xi_eta xi_id})}{=}\xi_{i,B,\eta}=\delta_{\eta^{-1}(i)}\big(F,\eta(X)\big)(1) & =\delta_{i}(F,X)(1)=\xi_{i,B,id}.
\end{align*}
Since $B$ was arbitrary, we have just shown Eq.~(\ref{eq: xi symm}).

Now we iteratively show Eq.~(\ref{eq:xi add}). Let $X_{j}(t)=1_{[1,\infty)}(t)$, $t\geq0$,
$j=1,\dots,d$ and let $f^{a}\in\mathcal{C}_{2}$ such that for $a\in\{1,\dots,d\}$
\[
f^{a}(x)=\begin{cases}
1, & \sum_{j=1}^{d}x_{j}=a\\
0, & \sum_{j=1}^{d}x_{j}\in(-\infty,a-1]\cup[a+1,\infty)
\end{cases}
\]
and $f_{i}^{a}(X)=0$ if $\sum_{j=1}^{d}x_{j}\leq a-1$, $i=1,\dots,d$.
Let $F^{a}(X)(t)=f^{a}(X(t))$, $t\geq0$. If $a=d$, then
\[
S_{j}^{A}(X)(1)=\begin{cases}
1, & |A|=a\\
0, & \text{otherwise}
\end{cases}
\]
for $j=1,\dots,d$ and $A\subseteq\{1,\dots,d\}$ with $j\in A$.
By exactness and Lemma \ref{lem: parameterisation ito} it follows
that 
\begin{align}
1 & =F^{a}(X)(1)-F^{a}(X)(0)\nonumber \\
 & =\sum_{j=1}^{d}\delta_{j}(F^{a},X)(1)\nonumber \\
 & =\sum_{j=1}^{d}\sum_{\substack{A\subseteq\{1,\dots,d\}\\
j\in A
}
}S_{j}^{A}(X)(1)\xi_{j,A,id}\nonumber \\
 & =\sum_{j=1}^{d}\sum_{\substack{A\subseteq\{1,\dots,d\}\\
|A|=d,\,j\in A
}
}\xi_{j,A,id}.\label{eq:a=00003Dd}
\end{align}
Now let $a=d-1$, then
\[
S_{j}^{A}(X)(1)=\begin{cases}
1, & |A|=a\\
-1, & |A|=a+1\\
0, & \text{otherwise}
\end{cases}
\] 
for $A\subseteq\{1,\dots,d\}$ with $j\in A$. Again, by exactness
we have that
\begin{align*}
0= & F^{a}(X)(1)-F^{a}(X)(0)\\
= & \sum_{j=1}^{d}\delta_{j}(F^{a},X)\\
= & \sum_{j=1}^{d}\sum_{\substack{A\subseteq\{1,\dots,d\}\\
j\in A
}
}S_{j}^{A}(X)(1)\xi_{j,A,id}\\
= & \sum_{j=1}^{d}\sum_{\substack{A\subseteq\{1,\dots,d\}\\
|A|=d-1,\,j\in A
}
}\xi_{j,A,id}-\sum_{j=1}^{d}\sum_{\substack{A\subseteq\{1,\dots,d\}\\
|A|=d,\,j\in A
}
}\xi_{j,A,id}.
\end{align*}
Using Eq.~(\ref{eq:a=00003Dd}) we obtain that
\[
\sum_{j=1}^{d}\sum_{\substack{A\subseteq\{1,\dots,d\}\\
|A|=d-1,\,j\in A
}
}\xi_{j,A,id}=1.
\]
Iteratively for any $a\in\{1,\dots,d\}$ it follows that
\[
\sum_{j=1}^{d}\sum_{\substack{A\subseteq\{1,\dots,d\}\\
|A|=a,\,j\in A
}
}\xi_{j,A,id}=1.
\]
\end{proof}

\subsection{\label{sec:Proof-for-IASU}Proof of Theorem \ref{thm:IASU}}
\begin{proof}
First we show that the IASU decomposition is exact and symmetric and
satisfies Eq.~(\ref{eq:IASU_AvISU}): By Proposition \ref{prop:(Generalized-Itos-formula).},
it follows that $\delta^{\text{IASU}}$ is an exact Itô decomposition.
Use Eq.~(\ref{eq:f pi}) to see that the IASU decomposition is symmetric.
If $d=1$, Eq.~(\ref{eq:IASU_AvISU}) is trivially true. Assume $d\geq2$.
Fix $i\in\{1,...,d\}$. Note that 
\[
\sum_{\pi\in\sigma_{d}}1_{\{\pi(j)<\pi(i)\}}=\begin{cases}
\frac{d!}{2}, & j\neq i\\
0, & j=i.
\end{cases}
\]
It follows that 
\begin{align}
\frac{1}{d!}\sum_{\pi\in\sigma_{d}}\sum_{\underset{\pi(j)<\pi(i)}{j=1}}^{d}I_{ij} & =\sum_{j=1}^{d}\bigg\{ I_{ij}\frac{1}{d!}\sum_{\pi\in\sigma_{d}}1_{\{\pi(j)<\pi(i)\}}\bigg\}\nonumber \\
 & =\frac{1}{2}I_{i1}+...+\frac{1}{2}I_{i(i-1)}+\frac{1}{2}I_{i(i+1)}+...+\frac{1}{2}I_{id}\nonumber \\
 & =\frac{1}{2}\sum_{j\neq i}I_{ij}.\label{eq:sum a_ij}
\end{align}
Eq.~(\ref{eq:sum a_ij}) implies Eq.~(\ref{eq:IASU_AvISU}).
Now we show that all exact and symmetric Itô decompositions are indistinguishable from the IASU decomposition. Let $\delta$ be a symmetric and exact
Itô decomposition with parameters $(\lambda_{ij})_{i,j=1,...,d}$
and $(\mu_{\pi})_{\pi\in\sigma_{d}}$. Since the  Itô decomposition is over-parameterised, we use the alternative parametrization according to  Eq.~(\ref{eq: ito decomposition xi}).
To prove that $\delta$ is indistinguishable from the IASU decomposition, we
show that $\lambda_{ij}$ and $\xi_{i,A,id}$ are equal to the coefficients
$\frac{1}{2}$ and $\xi_{i,A}$ as defined in Eq.~(\ref{eq:xi}). 

Suppose that $\lambda_{hk}\neq\frac{1}{2}$. Let $X\in\mathcal{X}^{d}$
have continuous paths with $X_{i}=1$, $i\notin\{h,k\}$, and $[X_{h},X_{k}]\neq0$.
Let $F(X)=\prod_{i=1}^{d}X_{i}$. Then $F(X)=F(\pi(X))$ for $\pi\in\sigma_{d}$.
Note that $I_{kh}=I_{hk}$. As $\delta$ is exact, we
have
\[
\sum_{i=1}^{d}\delta_{i}(F,X)=I_{h}+I_{k}+\lambda_{hk}I_{hk}+\lambda_{kh}I_{kh}=F(X)-F(X)(0)=I_{h}+I_{k}+I_{hk},
\]
hence $\lambda_{kh}=1-\lambda_{hk}\neq\lambda_{hk}$. Let $\pi\in\sigma_{d}$
such that $\pi^{-1}(h)=k$. Then, it follows that
\[
\delta_{\pi^{-1}(h)}(F,\pi(X))=\delta_{k}(F,\pi(X))=I_{h}+\lambda_{kh}I_{kh}\neq I_{h}+\lambda_{hk}I_{hk}=\delta_{h}(F,X).
\]
That means that $\delta$ is not symmetric, which is a contradiction to our assumption.  So we necessarily have that  $\lambda_{ij}=\frac{1}{2}$,
$i,j=1,...,d$.

Now let $a\in\{1,\dots,d\}$. For $i,j\in\{1,\dots,d\}$, let $A,B\subseteq\{1,\dots,d\}$
with $|A|=|B|=a$ and $i\in A$, $j\in B$. Then there is a permutation
$\eta\in\sigma_{d}$ such that $\eta^{-1}(A)=B$ and $j=\eta^{-1}(i)$.
By Eq.~(\ref{eq: xi symm}) it follows that

\begin{equation}
\xi_{i,A,id}=\xi_{j,B,id}.\label{eq:xi symm 2sets}
\end{equation}
Let $A_{1},\dots,A_{d}\subseteq\{1,\dots,d\}$ with $j\in A_{j}$
and $|A_{j}|=a$, $j=1,\dots,d$. Since 
\begin{equation}
\left|\big\{ A\subseteq\{1,\dots,d\}:j\in A,|A|=a\big\} \right|=\binom{d-1}{a-1},\label{eq: d-1 over a-1}
\end{equation}
we obtain by Eqs\emph{.~}(\ref{eq:xi add}), (\ref{eq:xi symm 2sets}) and (\ref{eq: d-1 over a-1}) that
\begin{align*}
1= \sum_{j=1}^{d}\sum_{\substack{A\subseteq\{1,\dots,d\}\\
|A|=a,\,j\in A
}
}\xi_{j,A,id} 
=\sum_{j=1}^{d}\binom{d-1}{a-1}\xi_{j,A_{j},id} = d\binom{d-1}{a-1}\xi_{i,A,id} 
\end{align*}
for $A\subseteq\{1,\dots,d\}$ with $i\in A$ and $|A|=a$. Therefore
we can conclude that 
\[
\xi_{i,A,id}=\frac{1}{d\binom{d-1}{|A|-1}}=\frac{(|A|-1)!(d-|A|)!}{d!}.
\]
\end{proof}

\subsection{\label{subsec:Proof_convergence}Proof of Theorem \ref{thm:convergence}}
\begin{proof}
Let $t>0$. Fix some $i\in\{1,...,d\}$ and some permutation $\pi$.
Since $F\in\mathcal{M}(\mathcal{C}_{2})$, by definition there is an $f\in\mathcal{C}_{2}$ such that $F(X)(t)=f(X(t))$, $t\geq0$. We first show that $\delta^{\text{SU},\pi,\gamma_{n}}(F,X)(t)\overset{p}{\to}\delta^{\text{ISU},\pi}(F,X)(t)$ for $n\to\infty$.
 Let $\gamma_{n}=\{0=s_{0}^{n}<s_{1}^{n}<...\}$, $n \in \mathbb{N}$, 
be a sequence of unbounded random partitions tending to the identity.
Let $\alpha>0$ and 
\[
\mathcal{A}_{\alpha}:=\big\{ s\in(0,t]\, :\,\max_{j=1,...,d}|\Delta X_{j}(s)|>\alpha\big\}.
\]
The set $\mathcal{A}_{\alpha}$ contains all time points in $[0,t]$ where
at least one component of a path $u\mapsto X(u)$ has jumps greater
than $\alpha$. The SU decomposition $\delta_{i}^{\text{SU},\pi,\gamma_{n}}$
with respect to $\gamma_{n}$ can be written as 
\begin{align}
\delta_{i}^{\text{SU},\pi,\gamma_{n}}(F,X)  (t)=&\sum_{l\in\mathbb{A}_{\alpha}}\bigg\{ f\left(X^{s_{l}^{n}}(t)+p_{\{j\,|\pi(j)\leq\pi(i)\}}\left(X^{s_{l+1}^{n}}(t)-X^{s_{l}^{n}}(t)\right)\right)\nonumber \\
 & \quad\quad\quad-f\left(X^{s_{l}^{n}}(t)+p_{\{j\,|\pi(j)<\pi(i)\}}\left(X^{s_{l+1}^{n}}(t)-X^{s_{l}^{n}}(t)\right)\right)\bigg\}\nonumber \\
 & +\sum_{l\in\mathbb{A}_{\alpha}^{c}}\bigg\{ f\left(X^{s_{l}^{n}}(t)+p_{\{j\,|\pi(j)\leq\pi(i)\}}\left(X^{s_{l+1}^{n}}(t)-X^{s_{l}^{n}}(t)\right)\right)\nonumber \\
 & \quad\quad\quad-f\left(X^{s_{l}^{n}}(t)+p_{\{j\,|\pi(j)<\pi(i)\}}\left(X^{s_{l+1}^{n}}(t)-X^{s_{l}^{n}}(t)\right)\right)\bigg\},\label{eq:D_i}
\end{align}
where $\mathbb{A}_{\alpha}=\{l\in\mathbb{N}_{0}:\mathcal{A}_{\alpha}\cap(s_{l}^{n},s_{l+1}^{n}]\neq\emptyset\}$ and $\mathbb{A}_{\alpha}^{c}=\mathbb{N}_{0}\setminus\mathbb{A}_{\alpha}$.
The first sum at the right-hand side of Eq.~(\ref{eq:D_i}) converges
a.s. for $n\to\infty$ to 
\begin{equation}
\sum_{s\in\mathcal{A}_{\alpha}}\bigg\{ f\left(X(s-)+p_{\{j\,|\pi(j)\leq\pi(i)\}}\big(\Delta X(s)\big)\right)-f\left(X(s-)+p_{\{j\,|\pi(j)<\pi(i)\}}\big(\Delta X(s)\big)\right)\bigg\}.\label{eq:sumY}
\end{equation}
Using a Taylor expansion and the same arguments as in the proof of the
classical Itô's formula, one can show that the second sum of the right-hand
side of Eq.~(\ref{eq:D_i}) converges in probability for $n\to\infty$
to
\begin{align}
I_{i}(t)+\frac{1}{2}H_{ii}(t)+\sum_{\pi(j)<\pi(i)}H_{ij}(t)-\sum_{s\in\mathcal{A}_{\alpha}}\bigg\{ f_{i}\left(X(s-)\right)\Delta X_{i}(s)+\frac{1}{2}f_{ii}\left(X(s-)\right)\left(\Delta X_{i}(s)\right)^{2}\nonumber \\
+\sum_{\underset{\pi(j)<\pi(i)}{j=1}}^{d}f_{ij}\left(X(s-)\right)\Delta X_{i}(s)\Delta X_{j}(s)\bigg\},\label{eq:jumpsdelta}
\end{align}
where $H_{ij}=\int_{0}^{\cdot}f_{ij}(X(s-))d[X_{i},X_{j}](s)$. The sum of the Eqs.~(\ref{eq:sumY}) and (\ref{eq:jumpsdelta}) is
\begin{align}
 & I_{i}(t)+\frac{1}{2}H_{ii}(t)+\sum_{\pi(j)<\pi(i)}H_{ij}(t)+\label{eq:IH}\\
 & \sum_{s\in\mathcal{A}_{\alpha}}\bigg\{ f\left(X(s-)+p_{\{j\,|\pi(j)\leq\pi(i)\}}(\Delta X(s))\right)-f\left(X(s-)+p_{\{j\,|\pi(j)<\pi(i)\}}(\Delta X(s))\right)\nonumber \\
 & \quad\quad\quad-f_{i}\left(X(s-)\right)\Delta X_{i}(s)\bigg\}\label{eq:sum A1}\\
 & -\sum_{s\in\mathcal{A}_{\alpha}}\frac{1}{2}f_{ii}\left(X(s-)\right)\left(\Delta X_{i}(s)\right)^{2}\label{eq:sum A2}\\
 & -\sum_{s\in\mathcal{A}_{\alpha}}\sum_{\underset{\pi(j)<\pi(i)}{j=1}}^{d}f_{ij}\left(X(s-)\right)\Delta X_{i}(s)\Delta X_{j}(s).\label{eq:A2b}
\end{align}
Since $X$ is a semimartingale, and because of  Lemma \ref{lem:S^A_abs_con}, we can see that the
sums (\ref{eq:sum A1}), (\ref{eq:sum A2}) and (\ref{eq:A2b}) are
absolutely convergent for $\alpha\to0$ so that (\ref{eq:IH}-\ref{eq:A2b})
converge for $\alpha\to0$ to $\delta^{\text{ISU},\pi}(F,X)(t)$,
using that
\[
I_{ij}=H_{ij}-\sum_{0\leq s\leq\cdot}f_{ij}\left(X(s-)\right)\Delta X_{i}(s)\Delta X_{j}(s).
\]
 By Theorem \ref{thm:IASU}  we get $\delta^{\text{ASU},\gamma_{n}}(F,X)(t)\overset{p}{\to}\delta^{\text{IASU},\pi}(F,X)(t)$ for $n\to\infty$.
\end{proof}

\subsection{\label{subsec:Stability}Stability}

In this section, we use the notation of \citet{christiansen2022decomposition}.
For $i=1,2$, let $\tau_{i}:[0,\infty)\to[0,\infty)$ with $\tau_{i}(t)\leq t$
for all $t\geq0$. The function 
\[
\tau(t)=\big(\tau_{1}(t),\tau_{2}(t)\big)
\]
is called a \emph{delay}. A delay is called \emph{phased} if there
is an unbounded partition $\{0=s_{0}<s_{1}<...\}$ of $[0,\infty)$
such that on each interval $(s_{l},s_{l+1}]$, at most one component
of $\tau$ is nonconstant. Let $(\tau^{n})_{n\in\mathbb{N}}$ be a
\emph{refining sequence of delays that increase to identity} (rsdii),
i.e., 
\[
\tau_{i}^{n}\big([0,t]\big)\subset\tau_{i}^{n+1}\big([0,t]\big),\;n\in\mathbb{N},\quad\text{ and }\overline{\bigcup_{n\in\mathbb{N}}\tau_{i}^{n}\big([0,t]\big)}=[0,t],\quad i=1,2.
\]
Let $\mathcal{T}$ be a set containing at least one phased rsdii.
Let $X=(X_{1},X_{2})$ be a semimartingale, and define 
\[
X\diamond\tau:=(X_{1}\circ\tau_{1},X_{2}\circ\tau_{2}),\quad\tau\in\mathcal{T}.
\]
Let 
\[
\mathbb{X}=\{X\diamond\tau\,:\,\tau\in\mathcal{T}\}\cup\{X\}.
\]
Let $\mathbb{D}_{0}$ be the set of càdlàg processes starting in zero
and let $\varrho:\mathbb{X}\to\mathbb{D}_{0}$. A mapping $\delta:\mathbb{X}\to\mathbb{D}_{0}^{2}$
is called \emph{decomposition scheme of $\varrho$.} The mapping $\delta$ assigns to
each $Y\in\mathcal{\mathbb{X}}$ a decomposition of $\varrho(Y)$.
The ISU decomposition scheme is abbreviated $\delta^{ISU}$. A decomposition
scheme is called \emph{stable} \emph{at} $X$ if 
\[
\delta(X\diamond\tau^{n})(t-)\overset{p}{\to}\delta(X)(t-),\quad n\to\infty,
\]
at each  $t >0$  for all rsdii $(\tau^{n})_{n\in\mathbb{N}}\subset\mathcal{T}$.
\begin{prop}
Assume that $X=(X_{1},X_{2})$ with $X_{1}=X_{2}=B$ for a Brownian
motion $B$. Let $\varrho(Y)=Y_{1}Y_{2}$ be a simple product. Then,
there is a set $\mathcal{T}$ of continuous phased rsdii such that
the ISU decomposition $\delta^{ISU}$ of $\varrho$ is not stable
at $X$. 
\end{prop}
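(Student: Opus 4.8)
The plan is to compute both decompositions explicitly and then exhibit a single phased rsdii along which they disagree. First I would record the target. For $X=(B,B)$ and $\varrho(Y)=Y_1Y_2$, i.e.~$f(x_1,x_2)=x_1x_2$, Theorem \ref{thm:ISU} (equivalently Corollary \ref{prop:av}, since $B$ is continuous) gives, for the identity order, $D_1^{ISU}(X)(t)=\int_0^tB\,dB=\tfrac12(B(t)^2-t)$ and $D_2^{ISU}(X)(t)=\int_0^tB\,dB+[B,B](t)=\tfrac12(B(t)^2-t)+t$, because $f_{12}\equiv1$ pushes the entire interaction $[B,B]=\mathrm{Id}$ onto the second factor. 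In particular $E[D_1^{ISU}(X)(t)]=0$ and $E[D_2^{ISU}(X)(t)]=t$. The argument below is insensitive to which order $\delta^{ISU}$ fixes: for $d=2$ the only alternative just swaps these two values, so $E[D_1^{ISU}(X)(t)]\in\{0,t\}$ in either case.

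Second, I would build $\mathcal T$ from one explicit phased rsdii. On a dyadic refinement of $(0,t]$, let $\tau^n=(\tau_1^n,\tau_2^n)$ be continuous, piecewise-linear, nondecreasing delays with $\tau_i^n\le\mathrm{id}$ that advance the two coordinates in strict alternation: on the odd sub-intervals only $\tau_1^n$ increases while $\tau_2^n$ is frozen, and on the even sub-intervals only $\tau_2^n$ increases while $\tau_1^n$ is frozen, each coordinate catching up to the identity at the right end of its active sub-interval. This is phased by construction and refines to the identity. Writing $Y^n=X\diamond\tau^n=(B\circ\tau_1^n,B\circ\tau_2^n)$, the decisive structural fact is $[Y_1^n,Y_2^n]=0$: by polarisation $[Y_1^n,Y_2^n]=\tfrac12([Y_1^n+Y_2^n]-[Y_1^n]-[Y_2^n])$, and along any partition refining the phased one at most one coordinate moves on each sub-interval, so the product of the two increments vanishes term by term, giving $[Y_1^n+Y_2^n]=\tau_1^n+\tau_2^n=[Y_1^n]+[Y_2^n]$.

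Third, since the interaction covariation is gone, the ISU decomposition of the \emph{fixed} process $Y^n$ simplifies. Computing directly from Definition \ref{def:SU} in the refining limit, which here is the $L^2$-limit of the left-point sums $\sum_l Y_2^n(s_l)(Y_1^n(s_{l+1})-Y_1^n(s_l))$ and of its mirror, one gets $D_1^{ISU}(Y^n)(t)=\int_0^tY_2^n\,dY_1^n$ and $D_2^{ISU}(Y^n)(t)=\int_0^tY_1^n\,dY_2^n$, with no $\pm\tfrac12[Y_1^n,Y_2^n]$ correction and hence no order-dependence; I work with these explicit sums rather than Theorem \ref{thm:ISU} precisely because $Y^n$ is \emph{not} an $\mathbb F$-semimartingale. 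Using $E[B(a)(B(b)-B(c))]=\min(a,b)-\min(a,c)$ on each active sub-interval, the frozen lag makes every nonvanishing term contribute exactly the length of one sub-interval, and summing yields $E[\int_0^tY_2^n\,dY_1^n]\to t/2$ (and symmetrically $E[\int_0^tY_1^n\,dY_2^n]\to t/2$, consistent with $D_1^{ISU}(Y^n)(t)+D_2^{ISU}(Y^n)(t)=Y_1^n(t)Y_2^n(t)\to B(t)^2$).

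Finally I would close the argument. The variables $D_1^{ISU}(Y^n)(t-)=\int_0^tY_2^n\,dY_1^n$ live in a fixed Wiener chaos and are bounded in $L^2$ uniformly in $n$, hence uniformly integrable; were $\delta^{ISU}(X\diamond\tau^n)(t-)$ to converge in probability to $\delta^{ISU}(X)(t-)$, their expectations would converge to $E[D_1^{ISU}(X)(t-)]\in\{0,t\}$, contradicting $E[D_1^{ISU}(Y^n)(t-)]\to t/2$. Therefore $\delta^{ISU}$ is not stable at $X$. I expect two main obstacles: first, justifying cleanly that the ISU decomposition of the non-$\mathbb F$-semimartingale $Y^n$ is still the plain integrals $\int Y_2^n\,dY_1^n$ (met by arguing at the level of the explicit SU sums together with the vanishing covariation, rather than invoking Theorem \ref{thm:ISU}); and second, the bookkeeping of the alternating lags in the expectation computation, which is what produces the symmetric split $t/2$ instead of the asymmetric value carried by $\delta^{ISU}(X)$.
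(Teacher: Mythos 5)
Your proof is correct, but it takes a genuinely different route from the one in Appendix \ref{subsec:Stability}. The paper never computes $\delta^{ISU}(X)$ at all: it puts \emph{two} rsdii into $\mathcal{T}$, namely an asymmetric one with $\tau_1^n\leq\tau_2^n$ in which the lagging component catches up exactly at the start of each phase of the other (so the frozen factor is evaluated at the \emph{right} endpoint of the moving factor's increment, producing the backward sum $2\int_0^t B\circ dB-\int_0^t B\,dB=\tfrac12 B_t^2+\tfrac12 t$), and its swap (producing the forward It\^{o} sum $\tfrac12 B_t^2-\tfrac12 t$); since the two limits differ, they cannot both equal $\delta^{ISU}(X)(t-)$, and stability fails. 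You instead use a single, symmetric alternating delay in which each component catches up at the end of its own phase, so the frozen factor sits at the interior time $u_{k-1}\in(u_{k-2},u_k)$; you then identify the limit only through its expectation, $E\bigl[\delta_1^{ISU}(X\diamond\tau^n)(t)\bigr]=\sum_{k\ \mathrm{odd}}(u_{k-1}-u_{k-2})\to t/2$, and rule out convergence to $\delta_1^{ISU}(X)(t)$ by comparing with $E[\delta_1^{ISU}(X)(t)]\in\{0,t\}$ (computed from Corollary \ref{prop:av}) via uniform integrability. Both arguments are sound; yours needs only one perturbation sequence and only first moments, at the price of an $L^2$-boundedness/UI step and of having to pin down the target for both possible orders $\pi$, whereas the paper's two-sequence trick sidesteps the target entirely and identifies the perturbed limits exactly as Stratonovich versus It\^{o} integrals. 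Your two flagged obstacles are real but manageable, and you resolve the first one the same way the paper implicitly does: for a phased delay the SU sum along any partition refining the phasing partition telescopes exactly to the phase-wise increments $\sum_{k}B(u_{k-1})\bigl(B(u_k)-B(u_{k-2})\bigr)$, so no appeal to Theorem \ref{thm:ISU} (which would not apply to the non-semimartingale $X\diamond\tau^n$) is needed.
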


\begin{proof}
Suppose that $\mathcal{T}$ contains a continuous phased rsdii $(\tau^{n})=(\tau_{1}^{n},\tau_{2}^{n}),n\in\mathbb{N}$,
with $\tau_{1}^{n}\leq\tau_{2}^{n},\,n\in\mathbb{N}$. For a partition
$(a_{l,i}^{n},b_{l,i}^{n}],\,l\in\mathbb{N}_{0},\,i=1,2$ of $[0,\infty)$
such that $(\tau_{j}^{n})_{j\neq i}$ is constant on $(a_{l,i}^{n},b_{l,i}^{n}]$,
let $\tau_{1}^{n}(a_{l,2}^{n})=\tau_{2}^{n}(a_{l,2}^{n}),\,n\in\mathbb{N},\,l\in\mathbb{N}_{0}$.
In addition, let $\mathcal{T}$ also contain $(\tilde{\tau}^{n})_{n\in\mathbb{N}}=((\tau_{2}^{n},\tau_{1}^{n}))_{n\in\mathbb{N}}$.
Since $\tau_{2}^{n}(a_{l,1}^{n})=\tau_{2}^{n}(b_{l,1}^{n})=\tau_{1}^{n}(b_{l,1}^{n})$
and by the multidimensional Taylor theorem, 
\begin{align*}
\delta_{1}^{ISU}(X\diamond\tau^{n})(t)= & \sum_{l}\bigg(\varrho\big((X\diamond\tau^{n})^{b_{l,1}^{n}\wedge t}\big)-\varrho\big((X\diamond\tau^{n})^{a_{l,1}^{n}\wedge t}\big)\bigg)\\
= & \sum_{l}\varrho_{1}\big((X\diamond\tau^{n})^{a_{l,1}^{n}\wedge t}\big)\bigg(X_{1}\big(\tau_{1}^{n}(b_{l,1}^{n}\wedge t)\big)-X_{1}\big(\tau_{1}^{n}(a_{l,1}^{n}\wedge t)\bigr)\bigg).\\
\end{align*}
By the definitions of $X_{1}$, $X_{2}$ and $\rho$, 
\begin{align*}
\delta_{1}^{ISU}(X\diamond\tau^{n})(t)= & \sum_{l}B\big(\tau_{2}^{n}(a_{l,1}^{n}\wedge t)\big)\bigg(B\big(\tau_{1}^{n}(b_{l,1}^{n}\wedge t)\big)-B\big(\tau_{1}^{n}(a_{l,1}^{n}\wedge t)\bigr)\bigg)\\
= & \sum_{l}B\big(\tau_{1}^{n}(b_{l,1}^{n}\wedge t)\big)\bigg(B\big(\tau_{1}^{n}(b_{l,1}^{n}\wedge t)\big)-B\big(\tau_{1}^{n}(a_{l,1}^{n}\wedge t)\bigr)\bigg)\\
= & \sum_{l}B(t_{l})\bigl(B(t_{l}\wedge t)-B(t_{l-1}\wedge t)\bigr)\\
= & 2\sum_{l}\frac{\bigl(B(t_{l})+B(t_{l-1})\bigr)}{2}\bigl(B(t_{l}\wedge t)-B(t_{l-1}\wedge t)\bigr)\\
 & -\sum_{l}B(t_{l-1})\bigl(B(t_{l}\wedge t)-B(t_{l-1}\wedge t)\bigr)
\end{align*}
for $t_{l}^{n}:=\tau_{1}^{n}(b_{l,1}^{n})=\tau_{1}^{n}(a_{l+1,1}^{n})=\tau_{2}^{n}(b_{l-1,2}^{n})=\tau_{2}^{n}(a_{l,2}^{n})$.
Let $\int_{0}^{t}B_{s}\circ dB_{s}$ denote the Stratonovich integral
and $\int_{0}^{t}B_{s}dB_{s}$ the Itô integral. It holds that 
\begin{align*}
\delta_{1}^{ISU}(X\diamond\tau^{n})(t) & \overset{p}{\rightarrow}2\int_{0}^{t}B_{s}\circ dB_{s}-\int_{0}^{t}B_{s}dB_{s}\\
 & =\frac{1}{2}B_{t}^{2}+\frac{1}{2}t
\end{align*}
for $n\rightarrow\infty$. By the same arguments,
\begin{align*}
\delta_{1}^{ISU}(X\diamond\tilde{\tau}^{n})(t) & =\sum_{l}\bigg(\varrho\big((X\diamond\tilde{\tau}^{n})^{b_{l,2}^{n}\wedge t}\big)-\varrho\big((X\diamond\tilde{\tau}^{n})^{a_{l,2}^{n}\wedge t}\big)\bigg)\\
 & =\sum_{l}\varrho_{1}\big((X\diamond\tilde{\tau}^{n})^{a_{l,2}^{n}\wedge t}\big)\bigg(X_{2}\big(\tau_{2}^{n}(b_{l,2}^{n}\wedge t)\big)-X_{2}\big(\tau_{2}^{n}(a_{l,2}^{n}\wedge t)\bigr)\bigg)\\
 & =\sum_{l}B\big(\tau_{1}^{n}(a_{l,2}^{n}\wedge t)\big)\bigg(B\big(\tau_{2}^{n}(b_{l,2}^{n}\wedge t)\big)-B\big(\tau_{2}^{n}(a_{l,2}^{n}\wedge t)\big)\bigg)\\
 & =\sum_{l}B\big(\tau_{2}^{n}(a_{l,2}^{n}\wedge t)\big)\bigg(B\big(\tau_{2}^{n}(b_{l,2}^{n}\wedge t)\big)-B\big(\tau_{2}^{n}(a_{l,2}^{n}\wedge t)\big)\bigg)\\
 & =\sum_{l}B(t_{l})\bigl(B(t_{l+1}\wedge t)-B(t_{l}\wedge t)\bigr)\\
 & \overset{p}{\rightarrow}\int_{0}^{t}B_{s}dB_{s}\\
 & =\frac{1}{2}B_{t}^{2}-\frac{1}{2}t
\end{align*}
for $n\rightarrow\infty$. Therefore, 
\[
\text{plim}_{n\rightarrow\infty}\delta_{i}^{ISU}(X\diamond\tau^{n})(t)\neq\text{plim}_{n\rightarrow\infty}\delta_{i}^{ISU}(X\diamond\tilde{\tau}^{n})(t),\quad i=1,2,
\]
for $t>0$, and hence, the ISU decomposition of $\varrho(X)$ cannot
be stable at $X$. 
\end{proof}

\end{document}